%%%%%%%%%%%%%%%%%%%%%%%%%%%%%%%%%%%%%%%%%%%%%%%%%%%%%%%%%%%%%%%%%%%
% Do not change anything in the preamble of the macro
%%%%%%%%%%%%%%%%%%%%%%%%%%%%%%%%%%%%%%%%%%%%%%%%%%%%%%%%%%%%%%%%%%%%
\documentclass[onecolumn,nofootinbib,thightenlines,notitlepage,tightenlines,longbibliography,11pt]{revtex4-1} % the option longbibliography implies that the names of the paper are shown 
\newcommand{\pagenumbaa}{1}
\usepackage{graphicx}
\usepackage{amssymb}
\usepackage{amstext}
%\usepackage{algpseudocode}
%\usepackage{amscd}
%\newcommand{\figurewidth}{\columnwidth}
%%%%%%%%%%%%%%%%%%%%%%%%%%%%%%%%%%%%%%%%%%%%%%%%%%%%%%%%%%%%%%%%%%%
%%%%%
\usepackage{enumerate}

%for figures
\usepackage{tikz}
\usetikzlibrary{chains}
\usetikzlibrary{fit}
\usepackage{pgflibraryarrows}		%optional
\usepackage{pgflibrarysnakes}		%optional
\usepackage{xcolor}
\usepackage{epsfig}
\usetikzlibrary{shapes.symbols,patterns} % for source symbols

% math
\usepackage{amsthm}
\usepackage{amsmath}
\usepackage{amsfonts}
\usepackage{amssymb,amstext}
\usepackage{bbm} % for bbm numbers
\usepackage{dsfont}

\usepackage[colorlinks=true,urlcolor=blue, hyperindex,breaklinks=true] {hyperref}
\usepackage{todonotes}

\usepackage{footnote}

\theoremstyle{plain}
%% my stuff
\newtheorem{mythm}{Theorem}
\newtheorem{myprop}[mythm]{Proposition}
\newtheorem{mycor}[mythm]{Corollary}

\theoremstyle{definition}
\newtheorem{mydef}{Definition}

\newcommand{\prlsection}[1]{\emph{#1}.---}

\newcommand{\Hmin}[1]{H_{\mathsf{min}}\!\left({#1}\right)} %cond minEntropy
\newcommand{\Hcmin}[2]{H_{\mathsf{min}}\!\left({#1}\!\left|{#2}\right.\right)} %cond minEntropy
\newcommand{\cY}{\mathsf{c}_Y} %for compressor
\newcommand{\cZ}{\mathsf{c}_Z} %for compressor

\def\BSC{\text{BSC}}
\def\BEC{\text{BEC}}
\def\Hb{\ensuremath{H_{b}}}

\def\setC{\mathsf{c}}

% entropies and mutual informations
\newcommand{\Hh}[1]{H\!\left({#1}\right)} %Entropy
\newcommand{\Hc}[2]{H\!\left({#1}\!\left|{#2}\right.\right)} %cond Entropy
 %coherent Information
\newcommand{\I}[2]{I\!\left({#1};{#2}\right)} %mutual Information
\newcommand{\Ic}[3]{I\!\left({#1};{#2}\!\left|{#3} \right. \right)} %mutual Information

 %mutual Information

 %one-way secret-key rate

\newcommand{\Perr}[2]{{p_{\rm err}}\!\left(\left.#1\right|#2\right)} %error prob.
\newcommand{\Pguess}[2]{{p_{\rm guess}}\!\left(\left.#1\right|#2\right)} %guessing prob.
 %security prob.
 %Z param
 %Entropy
 %fidelity

\newcommand{\W}{\mathsf{W}} %for channel law
\newcommand{\V}{\mathsf{V}} %for channel law
 %cond. prob.
 %usage: Pr[X \leq 5]

  %Bernoulli dist.
 %pos function
 %Question red
 %Comment blue
  %To check

 %Trace
 %fidelity

 %fidelity

% todonotes
  %To check

%Markov

\newcommand{\markovDavid}{\small{\mbox{$-\hspace{-1.3mm} \circ \hspace{-1.3mm}-$}}}

\usepackage{makecell}

\begin{document}

%%%%%%%%%%%%%%%%%%%%%%%%%%%%%%%%%%%%%%%%%%%%%%%%%%%%%%%%%%%%%%%%%%%%
% Place your title here
%%%%%%%%%%%%%%%%%%%%%%%%%%%%%%%%%%%%%%%%%%%%%%%%%%%%%%%%%%%%%%%%%%%%
\title{Universal Polar Codes for More Capable and Less Noisy Channels and Sources}

%%%%%%%%%%%%%%%%%%%%%%%%%%%%%%%%%%%%%%%%%%%%%%%%%%%%%%%%%%%%%%%%%%%%
% If using RevTex
%%%%%%%%%%%%%%%%%%%%%%%%%%%%%%%%%%%%%%%%%%%%%%%%%%%%%%%%%%%%%%%%%%%%
 \author{David Sutter}
 \email[]{suttedav@phys.ethz.ch}
 %\affiliation{Institute for Theoretical Physics, ETH Zurich, Switzerland}
 
 \author{Joseph M.\ Renes}
 \email[]{renes@phys.ethz.ch}
 \affiliation{Institute for Theoretical Physics, ETH Zurich, Switzerland}

%%%%%%%%%%%%%%%%%%%%%%%%%%%%%%%%%%%%%%%%%%%%%%%%%%%%%%%%%%%%%%%%%%%%
% If using IEEEtran
%%%%%%%%%%%%%%%%%%%%%%%%%%%%%%%%%%%%%%%%%%%%%%%%%%%%%%%%%%%%%%%%%%%%
%\author{David Sutter, Joseph M.\ Renes, and Renato Renner
%
%\thanks{David Sutter, Joseph M.\ Renes, and Renato Renner are with the Institute for Theoretical Physics, ETH Zurich, Switzerland (e-mail: \{suttedav,\ renes,\ renner\}@phys.ethz.ch).}

%\thanks{This work was supported by the Swiss National Science Foundation (through the National Centre of Competence in Research `Quantum Science and Technology' and grant No.~200020-135048) and by the European Research Council (grant No.~258932).}
%}

%\maketitle

%%%%%%%%%%%%%%%%%%%%%%%%%%%%%%%%%%%%%%%%%%%%%%%%%%%%%%%%%%%%%%%%%%%%
% Abstract to the abstract
%%%%%%%%%%%%%%%%%%%%%%%%%%%%%%%%%%%%%%%%%%%%%%%%%%%%%%%%%%%%%%%%%%%%

\begin{abstract}
We prove two results on the universality of polar codes for source coding and channel communication.  First, we show that for any polar code built for a source $P_{X,Z}$ there exists a slightly modified polar code---having the same rate, the same encoding and decoding complexity and the same error rate---that is universal for every source $P_{X,Y}$ when using successive cancellation decoding, at least when the channel $P_{Y|X}$ is \emph{more capable} than $P_{Z|X}$ and $P_X$ is such that it maximizes $\I{X}{Y}-\I{X}{Z}$ for the given channels $P_{Y|X}$ and $P_{Z|X}$. This result extends to channel coding for discrete memoryless channels. Second, we prove that polar codes using successive cancellation decoding are universal for \emph{less noisy} discrete memoryless channels.
\end{abstract}

 \maketitle %(If using RevTEx maketitle is here)
%\begin{IEEEkeywords}
%Secret-key agreement, private channel coding, one-way secret-key rate, secrecy capacity, wiretap channel scenario, more capable, less noisy, degraded, polarization phenomenon, polar codes, practically efficient, strongly secure
%\end{IEEEkeywords}

%Only for RevTEx
 \setcounter{page}{\pagenumbaa}  
 \thispagestyle{plain}
\vspace{6mm}

In source and channel coding scenarios, it is usually assumed that the statistics of the source or channel  is known precisely. However, this cannot be always guaranteed in practice. Hence, it is of great interest---both practically and theoretically---to have a code that remains reliable and efficient for a large class of different sources or channels. Such robust codes are called \emph{universal}. There are many different notions of universality and many different problems of how to communicate reliably when the statistics of the source or channel cannot be ensured to be known precisely. An excellent overview can be found in \cite{lapidoth98}.

%In this article we focus on the task of lossless source coding with side information. Let $P_{X,Y}$ denote a source that outputs a correlated pair of correlated random variables $(X,Y)$. If we consider $n$ i.i.d.\ copies of such a source the question that is addressed by source coding with side information is how many bits are needed to fully describe $X^n$ given the side information $Y^n$. It has been shown by Slepian and Wolf \cite{slepian73} that $n\Hc{X}{Y}$ bits are needed to fully describe $X^n$ given $Y^n$. Assuming we  have an optimal code for some source $P_{X,Y}$\footnote{By a code for a source $P_{X,Y}$, we mean a code for the task of compressing $X^n$ with respect to side information $Y^n$, where $(X^n,Y^n)\sim (P_{X,Y})^n$.} we are interested in the question of how well such a code would perform for a different source or in other words how sensitive optimal codes are with respect to the statistics of the source they compress.

\c Sa\c so\u glu showed in \cite[Chapter 7]{sasoglu_phd} that any capacity achieving code built for some discrete memoryless channel (DMC) $\W$ will also be reliable over any DMC $\V$ that is more capable than $\W$ when using an optimal decoder. With a similar proof technique it can be shown that the same is true for the task of source coding with side information. Assume that we have two sources $P_{X,Y}$ and $P_{X,Z}$ where $Y$ and $Z$ denote the side information such that $P_{Y|X}$ is more capable than $P_{Z|X}$, then any optimal code for $P_{X,Z}$ will also be reliable for $P_{X,Y}$ using an optimal decoder. In the first part of this article we will prove a stronger statement for the use of polar codes by showing that for any polar code constructed for some source $P_{X,Z}$ there exists a slightly modified polar code---having the same rate, the same efficient encoding and decoding and the same error rate---that can be used reliably for any source $P_{X,Y}$ using successive cancellation (SC) decoding, where $P_{Y|X}$ is more capable than $P_{Z|X}$ and $P_X$ is such that it maximizes $\I{X}{Y}-\I{X}{Z}$ for the given channels $P_{Y|X}$ and $P_{Z|X}$. This result carries over to channel coding for DMCs, implying that for any polar code for some symmetric DMC $\W$ there exists a slightly modified polar code having the same rate that can be used reliably for an arbitrary symmetric DMC $\V$ that is more capable than $\W$ using SC decoding, whenever the input distribution $P_X$ is chosen such that it maximizes $\I{X}{Y}-\I{X}{Z}$ for the given channels $P_{Y|X}$ and $P_{Z|X}$.

The second part of this article is concerned with the universality of polar codes under SC decoding. A polar code for some DMC $\W:\mathcal{X}\to \mathcal{Y}$ can be defined via a so-called \emph{high-entropy set} $\mathcal{D}_{\epsilon}^n(X|Y)$ as defined in \eqref{eq:DsetY}. Note that it is in general hard to compute this set. The only DMC for which the code construction is known to be simple is the binary erasure channel ($\BEC$) \cite{arikan09,sasogluPC}. 
Recently, an algorithm has been introduced that approximately can do the code construction for arbitrary DMCs with a fixed input alphabet, having a linear running time \cite{talandvardy10,tal12}. Very recently a slightly different algorithm for constructing polar codes has been introduced \cite{guruswami13}. Although these algorithms seem to solve the problem of the code construction for polar codes to a large extent, it is still interesting to know whether for example the code for some DMC could be also used for different DMCs. One reason is that it is not clear how good a polar code constructed by the algorithm \cite{talandvardy10,tal12} approximates the capacity. A precise analysis of these algorithms has been carried out in \cite{pedarsani11}. The more recent algorithm introduced in \cite{guruswami13}, which does guarantee an approximation to the capacity, has been shown to have a polynomial running time, which might not be useful in practice.\footnote{Since the exact running time has not been derived, it could be that the polynomial is very large making the algorithm impractical.} Another reason concerns universality. Assuming that one does not know the channel characteristics precisely, it would be useful to know how sensitive the code construction is with respect to different channel statistics.
Ar{\i}kan showed that a polar code constructed for some DMC $\W$ can be used for any other DMC $\V$ of which $\W$ is a degraded version, using SC decoding \cite{arikan09}. In this article we generalize this statement by proving that a polar code built for some DMC $\W$ can be used for any other DMC that is less noisy than $\W$, again using SC decoding. 

\prlsection{Previous work}
In \cite{hassani09}, it has been shown that polar codes are not universal for symmetric DMCs under SC decoding by proving that the compound capacity of a binary symmetric channel and a binary erasure channel having both capacity $\tfrac{1}{2}$, is strictly smaller than $\tfrac{1}{2}$ when using SC decoding. In some recent work \cite{hassani13}, two modifications of the standard polar coding scheme are introduced that achieve the compound capacity with low complexity for any  binary input symmetric output memoryless channel, but at the price of a considerably larger blocklength. Another recent work \cite{sasogluLele13} introduces a different modification of the standard polar coding scheme that achieves the compound capacity, again at the cost of a larger blocklength.
Note that the modified polar code we introduce in this article is totally different from these two recent results and does inherit the low-complexity and error decay from the standard polar coding scheme. A very recent work \cite{alsan13} is concerned with the universality of polar codes when removing the assumption that the decoder has knowledge about the precise channel statistics. Note that in this article we assume that the decoder always knows the precise channel statistics. The use of polar codes in more general broadcast scenarios has been studied in \cite{goela13}.

\prlsection{Structure}
The rest of this article is structured as follows. After introducing some notation and preliminary results in Section~\ref{sec:prelimiaries}, we prove in Section~\ref{sec:universalityMC} that \c Sa\c so\u glu's result---that any good code built for some DMC $\W$ will also perform well, using an optimal decoder, for any DMC that is more capable than $\W$---carries over to the setup of source coding with side information. In addition, we prove a stronger statement for polar codes. Section~\ref{sec:universalityMCchannels} shows that the result from Section~\ref{sec:universalityMC} remains valid for channel coding when considering DMCs. Finallly, in Section~\ref{sec:UniversalityPC} we show that any polar code built for some DMC $\W$ can be used reliably for any DMC that is less noisy than $\W$, using SC decoding.

%%%%%%%%%%%%%%%%%%%%%%%%%%%%%%%%%%%%%%%%%%%%%%%
%%%%%%%%%%%%%%%%%%%%%%%%%%%%%%%%%%%%%%%%%%%%%%%
%%%%%%%%%%%%%%%%%%%%%%%%%%%%%%%%%%%%%%%%%%%%%%%
\section{Preliminaries} \label{sec:prelimiaries}
We start by introducing some standard notation. Let $[k]=\left \lbrace 1,\ldots,k \right \rbrace$ for $k\in \mathbb{Z}^+$. For $x \in \mathbb{Z}_2^k$ and $\mathcal{I}\subseteq [k] $ we have $x[\mathcal{I}]=[x_i:i\in \mathcal{I}]$, $x^i=[x_1,\ldots,x_i]$ and $x_j^i=[x_j,\ldots,x_i]$ for $j\leq i$. For two sets $\mathcal{A},\mathcal{B}\subseteq [n]$  we write $\mathcal{A}$ $\scriptsize{\overset{\boldsymbol{\cdot}}{\subseteq}}$ $\mathcal{B}$ meaning that $\mathcal{A}$ is essentially contained in $\mathcal{B}$ or more precisely $|\mathcal{A}\backslash \mathcal{B}|=o(n)$. We write $\mathcal{A}\overset{\boldsymbol{\cdot}}{=}\mathcal{B}$ if $|\mathcal{A}\backslash \mathcal{B}| =o(n) = |\mathcal{B}\backslash \mathcal{A}|$. The complement of $\mathcal{A}$ in $[n]$ is denoted by $\mathcal{A}^{\setC}:=[n]\backslash \mathcal{A}$. All logarithms in this article are with respect to the basis $2$. For $\alpha \in [0,1]$, $\Hb(\alpha):=-\alpha \log \alpha - (1-\alpha) \log(1-\alpha) $ denotes the binary entropy function. We denote the min-entropy of some random variable $X$ by $\Hmin{X}:=-\log \max_{x\in\mathcal{X}} P_X(x)$ and the conditional min-entropy of $X$ given $Y$ by $\Hcmin{X}{Y}:= - \log \max_{x\in \mathcal{X}, y\in\mathcal{Y}}P_{X|Y=y}(x)$. Note that by definition it follows that $\Hmin{X}\leq \Hh{X}$ and $\Hcmin{X}{Y} \leq \Hc{X}{Y}$, where $\Hh{X}$ and $\Hc{X}{Y}$ denote the standard Shannon entropies. For two random variables $X$ and $Y$ we denote by $\Pguess{X}{Y}$ the probability of correctly guessing $X$ given $Y$ when using an optimal strategy. The error probability of $X$ given $Y$ can then be defined as $\Perr{X}{Y}:=1-\Pguess{X}{Y}$.

Let $X^n$ be a vector whose entries are i.i.d.\ Bernoulli($p$) distributed for $p\in[0,1]$ and  $n=2^k$ where $k\in \mathbb{N}$. Then, define $U^n = G_n X^n$, where $G_n$ denotes the polarization (or polar) transform which can be represented by the matrix
\begin{equation}
G_n := \begin{pmatrix}
1 & 1\\
0 & 1
\end{pmatrix}^{\!\! \otimes  \log n}, \label{eq:polarTrafo}
\end{equation}
where $A^{\otimes k}$ denotes the $k$th Kronecker power of an arbitrary matrix $A$. Furthermore, let $Y^n = \W^n X^n$, where $\W^n$ denotes $n$ independent uses of a DMC $\W:\mathcal{X}\to \mathcal{Y}$ and let $Z^n=\V^n X^n$, where $\V:\mathcal{X} \to \mathcal{Z}$ denotes another DMC. For any $\epsilon \in (0,1)$ we consider the two low-entropy sets
\begin{align}
\mathcal{D}_{\epsilon}^n(X|Y)&:= \left \lbrace i \in[n]: \Hc{U_i}{U^{i-1},Y^n}\leq \epsilon \right \rbrace \quad \textnormal{and}\label{eq:DsetY}\\
\mathcal{D}_{\epsilon}^n(X|Z)&:= \left \lbrace i \in[n]: \Hc{U_i}{U^{i-1},Z^n}\leq \epsilon \right \rbrace,\label{eq:DsetZ}
\end{align}
which define a polar code for $\W$ respectively $\V$ that is reliable using SC decoding.

%We restrict ourselves to binary input channels, i.e., $\mathcal{X}=\{0,1\}$. 
In this article we will make use of different relations between DMCs which we introduce next.
Let $\mathcal{C}$ denote the class of all binary input symmetric output (BISO) channels. The set $\mathcal{C}_C$ denotes the class of all BISO channels with fixed capacity $C$. Note that the binary symmetric channel ($\BSC$) and the binary erasure channel ($\BEC$) belong to $\mathcal{C}$. Furthermore, $\BSC(\alpha)$ with $C=1-\Hb(\alpha)$ and $\BEC(\beta)$ with $C=1-\beta$ belong both to the class $\mathcal{C}_C$. We denote by $\BSC_C$ the binary symmetric channel having capacity $C$ and by $\BEC_C$ the binary erasure channel having capacity $C$. 
%\begin{mydef}[\cite{nair09}]
%A class of distributions $\mathcal{P}=\{P_{X} \}$ on the input alphabet $\mathcal{X}$ is said to be a \emph{sufficient class} of distributions for a $2$-receiver broadcasst channel if the following holds: Given any triple of random variables $(U,V,X)$ distributed according to $P_{U,V,X}$, satisfying $(U,V) \markovDavid X \markovDavid (Y,Z)$ forms a Markov chain, there exists a distribution $Q_{U,V,X}$ (also obeying the Markov relationship $(U,V) \markovDavid X \markovDavid (Y,Z)$) that satisfyies
%\begin{itemize}
%\item $Q_X \in \mathcal{P}$
%\item $\I{U}{Y}_P \leq \I{U}{Y}_Q$ and $\I{U}{Z}_P \leq \I{U}{Z}_Q$ 
%\item $\I{V}{Y}_P \leq \I{V}{Y}_Q$ and $\I{V}{Z}_P \leq \I{V}{Z}_Q$
%\item $\Ic{X}{Y}{U}_P \leq \Ic{X}{Y}{U}_Q$ and $\Ic{X}{Z}{U}_P \leq \Ic{X}{Z}{U}_Q$ 
%\item $\Ic{X}{Y}{V}_P \leq \Ic{X}{Y}{V}_Q$ and $\Ic{X}{Z}{V}_P \leq \Ic{X}{Z}{V}_Q$
%\item $\I{X}{Y}_P \leq \I{X}{Y}_Q$ and $\I{X}{Z}_P \leq \I{X}{Z}_Q$.
%\end{itemize}
%\end{mydef}

%\begin{myclaim}[\cite{nair09}]  \label{claim:sufficientUniform}
%Consider a binary input broadcast channel whose component channels $\W:\mathcal{X}\to \mathcal{Y}$ and $\V:\mathcal{X} \to \mathcal{Z}$ are both output-symmetric. Then the uniform distriubtion $\Prv{X=0}=\tfrac{1}{2}$ forms a sufficient class.
%\end{myclaim}

\begin{mydef}
Let $\W:\mathcal{X} \to \mathcal{Y}$ and $\V:\mathcal{X}\to \mathcal{Z}$ be two DMCs then
\begin{itemize}
\item $\W$ is \emph{more capable} than $\V$ (denoted by $\W \gg \V$) if $\I{X}{Y}\geq \I{X}{Z}$ for all distributions $P_X $.
\item $\W$ is \emph{less noisy} than $\V$ (denoted by $\W \succeq \V$) if $\I{U}{Y}\geq \I{U}{Z}$ for all distributions $P_{U,X}$ where $U$ has finite support and $U\markovDavid X \markovDavid (Y,Z)$ form a Markov chain.
%\item $\W$ is \emph{essentially less noisy} than $\V$ (denoted by $\W\succeq \V$) if there exists a sufficient class of distributions $\mathcal{P}$ such that whenever $P_X(x) \in \mathcal{P}$ for all $U\markovDavid X \markovDavid (Y,Z)$ we have $\I{U}{Y} \geq \I{U}{Z}$. 
\item $\V$ is said to be a \emph{degraded} version of $\W$ if there exists a channel $\mathsf{T}:\mathcal{Y}\to \mathcal{Z}$ such that $\V(z|x)=\sum_{y \in \mathcal{Y}} \W(y|x)\mathsf{T}(z|y)$ for all $x \in \mathcal{X}$, $z\in \mathcal{Z}$.
\end{itemize}
\end{mydef}
\begin{figure}[!htb]
%\tikzsetnextfilename{Extractor}
\centering
\def \x{1.0}
\def \s{0.2}

\begin{tikzpicture}[scale=1,auto, node distance=1cm,>=latex']
	
    % PA
 \draw[dotted] (0,0) ellipse (0.55cm and 0.35cm); %degraded
 \draw (0,0) ellipse (1cm and 0.6cm); %less noisy
 \draw[dashed] (-\s,0) ellipse (1.5cm and 0.8cm); %more capable
% \draw[dotted] (2*\s,0) ellipse (1.75cm and 0.8cm); %more capable
  
%    \node at (4,0.5) {essentially less noisy};
    \node at (3.15,-0.5) {less noisy};    
    \node at (-3.3,0.5) {more capable};
    \node at (-2.95,-0.5) {degraded};         
 %arrows
% \draw[->] (2.35,0.45) -- (1.75,0);   
 \draw[->] (2.35,-0.5) -- (0.75,0);   
 \draw[->] (-2.2,0.45) -- (-1.5,0);   
 \draw[->] (-2.2,-0.45) -- (0,0);                   
\end{tikzpicture}
\caption{\small The relations between the three classes \emph{degraded}, \emph{less noisy} and \emph{more capable}. The proofs can be found in \cite{nair09,korner75}.}
\label{fig:relations}
\end{figure}
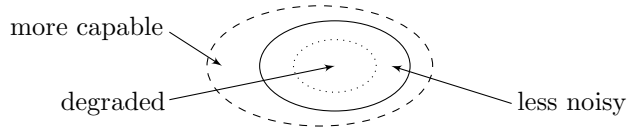
%Note that having two DMC $\W$ and $\V$ being more capable is is strictly more general than being less noisy which is again more general than being degraded \cite{nair09,korner75}. In other words $\W$ being a degraded version of $\V$ implies that $\V$ is less noisy than $\W$ which again implies that $\V$ is more capable than $\W$. 

\begin{myprop}[{\cite{csiszar78},\cite[Page 448]{csiszarkorner81}}] \label{prop:csiszarTrick}
Let $\W:\mathcal{X}\to \mathcal{Y}$ and $\V:\mathcal{X}\to\mathcal{Z}$ be two DMCs such that $\W \gg \V$. Then $\Ic{X}{Y}{U} \geq \Ic{X}{Z}{U}$ for all $P_{U,X}$ where $U \markovDavid X \markovDavid (Y,Z)$.
\end{myprop}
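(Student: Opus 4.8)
The plan is to reduce the conditional inequality to the unconditional definition of \emph{more capable} by conditioning on each value of $U$ separately and then averaging. First I would expand both sides as averages over the realizations of $U$, writing
\begin{equation}
\Ic{X}{Y}{U} = \sum_u P_U(u)\, \Ic{X}{Y}{U=u}, \qquad \Ic{X}{Z}{U} = \sum_u P_U(u)\, \Ic{X}{Z}{U=u}. \nonumber
\end{equation}
This reduces the problem to comparing $\Ic{X}{Y}{U=u}$ with $\Ic{X}{Z}{U=u}$ for each fixed value $u$.

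The key observation is that the Markov chain $U \markovDavid X \markovDavid (Y,Z)$ guarantees that conditioning on the event $U=u$ leaves the two channels untouched and only reshapes the input distribution. Indeed, since $Y$ is produced from $X$ through $\W$ and $Z$ from $X$ through $\V$ \emph{independently of} $U$, we have $P_{Y|X,U=u} = \W$ and $P_{Z|X,U=u} = \V$ for every $u$. Consequently $\Ic{X}{Y}{U=u}$ is exactly the mutual information obtained by feeding the conditional input distribution $P_{X|U=u}$ into $\W$, and $\Ic{X}{Z}{U=u}$ is obtained by feeding the \emph{same} distribution $P_{X|U=u}$ into $\V$.

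With this identification in hand, the definition of $\W \gg \V$ applies directly. Because \emph{more capable} demands $\I{X}{Y} \geq \I{X}{Z}$ for \emph{every} input distribution $P_X$, instantiating it with $P_X = P_{X|U=u}$ gives $\Ic{X}{Y}{U=u} \geq \Ic{X}{Z}{U=u}$ for each fixed $u$. Multiplying by the nonnegative weight $P_U(u)$ and summing over $u$ then yields the claimed inequality $\Ic{X}{Y}{U} \geq \Ic{X}{Z}{U}$.

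The only step requiring genuine care---and the sole place where the Markov assumption is indispensable---is the identification in the second paragraph: one must verify that conditioning on $U=u$ does not alter the channel laws but merely replaces $P_X$ by $P_{X|U=u}$. This is precisely what $U \markovDavid X \markovDavid (Y,Z)$ supplies. Without it, $U$ could carry information about $(Y,Z)$ beyond what passes through $X$, the per-$u$ quantities would no longer arise from the original channels $\W$ and $\V$ acting on a plain input distribution, and the reduction to the defining property of more capable would break down.
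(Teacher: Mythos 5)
Your proof is correct: the Markov condition $U \markovDavid X \markovDavid (Y,Z)$ does guarantee $P_{Y|X,U=u}=\W$ and $P_{Z|X,U=u}=\V$, so each conditional term is the mutual information of the original channel fed with $P_{X|U=u}$, and averaging the unconditional more-capable inequality over $u$ gives the claim. Note that the paper does not prove this proposition at all---it is quoted from Csisz\'ar (1978) and Csisz\'ar--K\"orner---and your conditioning-and-averaging argument is exactly the standard proof found in those references, so there is nothing to compare beyond observing that you have supplied the proof the paper omits.
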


\begin{myprop}[\cite{geng10}] \label{prop:extremes}
Let $F \in \mathcal{C}_C$, then $F \gg \BSC_C $ and $\BEC_C \gg F$.
\end{myprop}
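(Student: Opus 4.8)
The plan is to collapse both assertions into a single convexity statement and then appeal to Mrs.\ Gerber's Lemma. Since every channel here has binary input, it suffices to verify the two more-capable relations for each input law $P_X=\Bernoulli{s}$, $s\in[0,1]$; write $I_F(s):=\I{X}{Y}$ for $X\sim\Bernoulli{s}$ transmitted through $F$. First I would invoke the standard decomposition of a BISO channel into binary symmetric channels: the output-symmetry involution of $F$ partitions the output alphabet into symmetric pairs, and conditioned on the pair index $Q$---which is independent of $X$ and visible to the decoder---the channel acts as a $\BSC(p_q)$. Writing $\lambda_q:=\Prv{Q=q}$ this yields
\[
I_F(s)=\sum_q\lambda_q\bigl(\Hb(s\ast p_q)-\Hb(p_q)\bigr),\qquad C=\sum_q\lambda_q\bigl(1-\Hb(p_q)\bigr),
\]
where $a\ast b:=a(1-b)+(1-a)b$ is binary convolution and the second identity is the capacity of $F$, attained at $s=\tfrac12$.

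Next I would reparametrize each component by its own capacity $c_q:=1-\Hb(p_q)\in[0,1]$ and define, for fixed $s$,
\[
\phi_s(c):=\Hb\bigl(s\ast\Hb^{-1}(1-c)\bigr)+c-1,
\]
with $\Hb^{-1}\colon[0,1]\to[0,\tfrac12]$ the inverse binary entropy, so that $I_F(s)=\sum_q\lambda_q\,\phi_s(c_q)$ subject to $\sum_q\lambda_q c_q=C$. Evaluating at the endpoints gives $\phi_s(0)=0$ and $\phi_s(1)=\Hb(s)$, while a single BSC of capacity $C$ gives $I_{\BSC_C}(s)=\phi_s(C)$ and the BEC of capacity $C$ gives $I_{\BEC_C}(s)=C\,\Hb(s)$, which is exactly the height at $c=C$ of the chord of $\phi_s$ through its two endpoints.

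The crux is to show that $\phi_s$ is \emph{convex} on $[0,1]$ for every fixed $s$. As $c\mapsto 1-c$ and the summand $c-1$ are affine, this is equivalent to convexity of $u\mapsto\Hb\bigl(s\ast\Hb^{-1}(u)\bigr)$ on $[0,1]$, which (by the symmetry of $\ast$) is precisely the analytic statement at the heart of Mrs.\ Gerber's Lemma. Granting it, both relations drop out. Jensen's inequality gives
\[
I_F(s)=\sum_q\lambda_q\,\phi_s(c_q)\ \geq\ \phi_s\Bigl(\textstyle\sum_q\lambda_q c_q\Bigr)=\phi_s(C)=I_{\BSC_C}(s),
\]
hence $F\gg\BSC_C$; and the below-chord property of a convex function, $\phi_s(c_q)\leq c_q\,\Hb(s)$, summed against $\lambda_q$ gives
\[
I_F(s)=\sum_q\lambda_q\,\phi_s(c_q)\ \leq\ \Hb(s)\sum_q\lambda_q c_q=C\,\Hb(s)=I_{\BEC_C}(s),
\]
hence $\BEC_C\gg F$. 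As $s$ was arbitrary, both more-capable relations hold.

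I expect the convexity of $\phi_s$ to be the only genuine obstacle: if Mrs.\ Gerber's Lemma may be cited the proof is complete, and otherwise it is obtained from the classical Wyner--Ziv second-derivative computation verifying that the relevant second derivative is nonnegative. The only other point needing care is the first step, where the reduction to a mixture of BSCs must use the output-symmetry of BISO channels together with the independence of the pair index $Q$ from the input.
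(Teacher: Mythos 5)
The paper gives no proof of this proposition at all---it is imported directly from \cite{geng10}---so there is nothing internal to compare against; your reconstruction is, in fact, essentially the argument of that reference. Your proof is correct: the decomposition of a BISO channel into BSC components with the pair index $Q$ independent of the input and determined by the output is valid (so $\I{X}{Y}=\sum_q\lambda_q I(X;Y|Q=q)$), the endpoint evaluations $\phi_s(0)=0$, $\phi_s(1)=\Hb(s)$ and the identifications $I_{\BSC_C}(s)=\phi_s(C)$, $I_{\BEC_C}(s)=C\,\Hb(s)$ all check out, and once the Wyner--Ziv convexity of $u\mapsto \Hb\bigl(s\ast \Hb^{-1}(u)\bigr)$ (the analytic core of Mrs.\ Gerber's Lemma, which is classical and citable) is granted, Jensen's inequality and the below-chord bound are applied in the correct directions to yield $F\gg\BSC_C$ and $\BEC_C\gg F$ for every $\Bernoulli{s}$ input, which is exactly the more-capable relation for binary-input channels.
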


\begin{myprop}[\cite{geng10}] \label{prop:moreCapableComparable}
Let $F$ and $G$ be two arbitrary channels in $\mathcal{C}_C$ whose output alphabet size is at most $3$, then either $F \gg G$ or $F \ll G$, i.e., $F$ and $G$ are more capable comparable.
\end{myprop}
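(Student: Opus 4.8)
The plan is to exploit the very restricted structure of binary-input symmetric-output (BISO) channels with at most three output letters and to reduce the statement to a one-dimensional monotonicity claim. First I would classify the channels involved. For a BISO channel the symmetry is an involution $\pi$ on the output alphabet with $\W(y|1)=\W(\pi(y)|0)$; on an alphabet of size at most three the only nontrivial possibility is that $\pi$ swaps two letters and fixes the third. Hence, up to relabelling of outputs (under which $\I{X}{Y}$ is invariant), and assuming WLOG that the crossover is at most $\tfrac12$, every such channel is a \emph{binary symmetric channel with erasures}: with probability $e$ it outputs the fixed (erasure) symbol, and otherwise it behaves like a $\BSC$ with crossover $\delta\in[0,\tfrac12]$. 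Thus $F$ and $G$ are each described by a pair $(e,\delta)$, the degenerate cases $e=0$ and $\delta=0$ being a $\BSC$ and a $\BEC$ respectively.

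Next I would compute the mutual information for a $\Bernoulli{\lambda}$ input. A short calculation gives $\I{X}{Y}=(1-e)\big[\Hb(\delta+\lambda(1-2\delta))-\Hb(\delta)\big]$, which is symmetric under $\lambda\mapsto 1-\lambda$ and maximised at $\lambda=\tfrac12$, where it equals the capacity $C=(1-e)(1-\Hb(\delta))$. Imposing $F,G\in\mathcal{C}_C$ fixes this value and collapses the two free parameters to one: the channels of capacity $C$ and output size at most three form a single curve, parametrised by $\delta$ running from $\delta=0$ (giving $\BEC_C$) to $\delta=\Hb^{-1}(1-C)$ (giving $\BSC_C$), with $1-e=C/(1-\Hb(\delta))$ along the way. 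Because the two $\I{X}{Y}$ curves agree at $\lambda\in\{0,\tfrac12,1\}$, the channels are more-capable comparable as soon as one shows that, for every fixed $\lambda\in(0,\tfrac12)$, the map $\delta\mapsto\I{X}{Y}$ is monotone in a direction independent of $\lambda$. A total order then follows, with smaller $\delta$ (more erasures) being the more capable channel, consistent with Proposition~\ref{prop:extremes}.

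The crux is therefore this monotonicity in $\delta$. Writing $\sigma:=\delta+\lambda(1-2\delta)$ and using the fixed-capacity constraint, one gets $\I{X}{Y}=C\big[1-(1-\Hb(\sigma))/(1-\Hb(\delta))\big]$, so it suffices to show that $Q(\delta):=(1-\Hb(\sigma))/(1-\Hb(\delta))$ is monotone. Taking a logarithmic derivative and substituting $u=\tfrac12-\delta$ (so that $\sigma=\tfrac12-(1-2\lambda)u$ and $1-\Hb(\delta)=h(u)$ with $h(u):=1-\Hb(\tfrac12-u)$ the $\BSC$-capacity function) recasts the claim as: the elasticity $u\,h'(u)/h(u)$ of $h$ is nondecreasing on $(0,\tfrac12)$. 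I expect this single-variable inequality to be the main obstacle, since it is genuinely analytic rather than structural and is tight in the limit $\delta\to\Hb^{-1}(1-C)$ (the $\BSC$ endpoint); a proof must therefore control $\Hb$ and its first two derivatives near $\tfrac12$ carefully, e.g.\ through convexity and Taylor estimates, where $h(u)\sim(2/\ln 2)\,u^2$. Granting this, the elasticity inequality forces the $\I{X}{Y}$ curves to be nested, and comparability follows. The restriction to output size at most three is exactly what makes the capacity-$C$ family one-dimensional, hence totally ordered; this is why the argument would be expected to break for larger output alphabets.
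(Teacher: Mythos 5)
The paper contains no proof of this proposition to compare against: it is imported verbatim from \cite{geng10}, so your proposal can only be judged on its own merits. On those merits, the structural part of your plan is correct. The classification of BISO channels with at most three outputs as binary symmetric erasure channels parametrized by $(e,\delta)$ is right; the formula $\I{X}{Y}=(1-e)\left[\Hb(\delta+\lambda(1-2\delta))-\Hb(\delta)\right]$ is right; fixing the capacity does collapse the family to a one-parameter curve joining $\BEC_C$ ($\delta=0$) to $\BSC_C$ ($e=0$); and your algebra correctly reduces more-capable comparability to the claim that the elasticity $E(u)=u\,h'(u)/h(u)$ of $h(u)=1-\Hb(\tfrac12-u)$ is nondecreasing on $(0,\tfrac12)$. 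The direction this produces (smaller $\delta$, i.e.\ more erasure-like, is more capable) is also consistent with Proposition~\ref{prop:extremes}.

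The genuine gap is that the proposal stops exactly at the point where the mathematical content begins. The elasticity inequality, which you yourself call ``the crux'' and then assume (``Granting this\ldots''), \emph{is} the proposition; everything before it is bookkeeping. The claim is true---numerically $E$ increases from $E(0^+)=2$ through $E(0.25)\approx 2.10$, $E(0.4)\approx 2.39$, $E(0.49)\approx 3.53$, diverging as $u\to\tfrac12$---but no proof is given or seriously sketched. Concretely, since $E'=\bigl(h h' + u h h'' - u (h')^2\bigr)/h^2$, what must be shown is the global inequality $h(u)h'(u)+u\,h(u)h''(u)\geq u\,h'(u)^2$ on $(0,\tfrac12)$, with $h'(u)=\log\frac{1/2+u}{1/2-u}$ and $h''(u)=\frac{1}{\ln 2}\cdot\frac{1}{1/4-u^2}$: a genuine inequality between transcendental functions on the whole interval. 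The tools you gesture at do not close it: the Taylor estimate $h(u)\sim(2/\ln 2)u^2$ controls only the regime $u\to 0$, and convexity of $h$ alone cannot imply monotone elasticity (convex functions can have decreasing elasticity). So what you have is a correct and clean reduction of the proposition to a one-variable inequality---quite possibly the same reduction underlying \cite{geng10}---but not a proof; the missing analytic step is precisely what the cited reference supplies.
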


%\begin{mydef}
%Let $\W:\mathcal{X}^n \to \mathcal{Y}^n$ and $\V:\mathcal{X}^n \to \mathcal{Z}^n$ then we call $\W$ \emph{i.i.d.\ less noisy} than $\V$ (denoted by $\W   \stackrel{i.i.d.}{\succeq}  \V$) if  $\I{U}{Y^n} \geq \I{U}{Z^n}$ for all $P_{U,X^n}$ such that $U\markovDavid X^n \markovDavid (Y^n,Z^n)$ and $P_{X^n}=(P_X)^n$. 
%\end{mydef}
%\ds{Eventually go back to essentially less noisy}

The more capable as well as the less noisy relation are preserved under products of more capable, respectively less noisy, channels.
\begin{myprop}[\cite{csiszarkorner81}] \label{prop:tensorMC}
Let $\W:\mathcal{X}\to\mathcal{Y}$ and $\V:\mathcal{X}\to \mathcal{Z}$ be two DMCs such that $\W\gg \V$. Then $\W^n \gg \V^n$.
\end{myprop}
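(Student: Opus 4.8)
The plan is to fix an \emph{arbitrary} joint input distribution $P_{X^n}$ and show $\I{X^n}{Y^n}\geq \I{X^n}{Z^n}$, where $Y^n=\W^n X^n$ and $Z^n=\V^n X^n$. Since all the genuine inequality content is already packaged in Proposition~\ref{prop:csiszarTrick}, the real work lies in decomposing the $n$-letter mutual informations into single-letter conditional terms to which that proposition can be applied one at a time.

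First I would introduce the hybrid output vectors $V_i:=(Y^i,Z_{i+1}^n)$ for $i=0,1,\dots,n$, so that $V_0=Z^n$ and $V_n=Y^n$, and write the difference as the telescoping sum
\begin{equation}
\I{X^n}{Y^n}-\I{X^n}{Z^n}=\sum_{i=1}^n\Big(\I{X^n}{V_i}-\I{X^n}{V_{i-1}}\Big). \nonumber
\end{equation}
The consecutive hybrids $V_i=(Y^{i-1},Y_i,Z_{i+1}^n)$ and $V_{i-1}=(Y^{i-1},Z_i,Z_{i+1}^n)$ differ only in the $i$-th coordinate, so abbreviating the common part by $B_i:=(Y^{i-1},Z_{i+1}^n)$ and applying the chain rule to both terms cancels the $\I{X^n}{B_i}$ contributions and leaves $\I{X^n}{V_i}-\I{X^n}{V_{i-1}}=\Ic{X^n}{Y_i}{B_i}-\Ic{X^n}{Z_i}{B_i}$.

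Next I would exploit the memoryless product structure. Because $Y_i$ is produced from $X_i$ through an independent use of $\W$, it is conditionally independent of all the other inputs given $(X_i,B_i)$, so the chain rule collapses $\Ic{X^n}{Y_i}{B_i}=\Ic{X_i}{Y_i}{B_i}$, and likewise $\Ic{X^n}{Z_i}{B_i}=\Ic{X_i}{Z_i}{B_i}$. The same independence of the coordinate-$i$ noise from everything else shows that $B_i\markovDavid X_i\markovDavid(Y_i,Z_i)$ is a Markov chain. Each summand therefore has the form $\Ic{X_i}{Y_i}{B_i}-\Ic{X_i}{Z_i}{B_i}$ with $B_i$ playing the role of the auxiliary variable in Proposition~\ref{prop:csiszarTrick}; invoking that proposition (using $\W\gg\V$) yields nonnegativity of every term, and hence of the whole sum.

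I expect the only delicate point to be the two memoryless reductions---verifying the conditional independence of $Y_i$ (and of $Z_i$) from $X^n\setminus\{X_i\}$ given $(X_i,B_i)$, and the Markov chain $B_i\markovDavid X_i\markovDavid(Y_i,Z_i)$---since both hinge on the fact that the noise acting on coordinate $i$ is independent of the inputs and of the noise on all other coordinates, \emph{even though} the input coordinates $X_1,\dots,X_n$ may be arbitrarily correlated. Once these are established, the statement follows by a term-by-term application of the single-letter more-capable inequality.
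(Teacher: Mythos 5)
Your proof is correct, but there is nothing in the paper to compare it against: Proposition~\ref{prop:tensorMC} is stated as a known result and simply cited to Csisz\'ar and K\"orner, with no in-paper proof. Your hybrid (telescoping) argument is the standard proof of this fact, and the steps you flag as delicate do go through: under the natural coupling in which both channels act coordinate-wise on the same input $X^n$ (only the marginals of $(X^n,Y^n)$ and $(X^n,Z^n)$ matter, so any such coupling is admissible), the conditional law of $\left(Y_i,X_{\neq i}\right)$ given $\left(X_i,B_i\right)$ factorizes into a term depending on $(y_i,x_i)$ and a term depending on $(x^n,b_i)$, which yields both the reduction $\Ic{X^n}{Y_i}{B_i}=\Ic{X_i}{Y_i}{B_i}$ and the Markov chain $B_i\markovDavid X_i\markovDavid(Y_i,Z_i)$ needed to invoke Proposition~\ref{prop:csiszarTrick} with $U=B_i$ (which has finite support, as required). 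It is worth noting that your route is structurally the same as the proof the paper does give for the less-noisy analogue, Proposition~\ref{prop:tensorEssLN}: there the difference $\I{U}{Y^n}-\I{U}{Z^n}$ is likewise reduced, by inserting mixed terms such as $\I{U}{Y^{n-1},Z_n}-\I{U}{Y^{n-1},Z_n}$, to a sum of single-coordinate differences conditioned on hybrid side information, each nonnegative by the single-letter relation. The only substantive difference is how the conditioning is absorbed: in the less-noisy case the auxiliary variable in the definition soaks it up directly, whereas in the more-capable case one needs the Markov condition and Proposition~\ref{prop:csiszarTrick}, exactly as you use it. Your proposal thus both fills the gap left by the citation and unifies the two product-preservation results under a single technique.
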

%\begin{proof}
%This is shown in \cite[p.99]{csiszarkorner81}.
%\end{proof}

\begin{myprop} \label{prop:tensorEssLN}
Let $\W:\mathcal{X}\to\mathcal{Y}$ and $\V:\mathcal{X}\to \mathcal{Z}$ be two DMCs such that $\W\succeq \V$. Then $\W^n \succeq \V^n$.
\end{myprop}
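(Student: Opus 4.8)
The plan is to prove this by a hybrid (telescoping) argument that exchanges the $\V$-outputs for $\W$-outputs one coordinate at a time, thereby reducing the product statement to $n$ single-letter applications of the defining less-noisy inequality. Fix any auxiliary $U$ of finite support together with a joint law $P_{U,X^n}$ for which $U \markovDavid X^n \markovDavid (Y^n,Z^n)$ is a Markov chain; the goal is to establish $\I{U}{Y^n}\geq \I{U}{Z^n}$.

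For $k\in\{0,\ldots,n\}$ I would introduce the interpolating quantity $S_k := \I{U}{Y_1\cdots Y_k\,Z_{k+1}\cdots Z_n}$, so that $S_0=\I{U}{Z^n}$ and $S_n=\I{U}{Y^n}$, and write the target difference as the telescoping sum $\sum_{k=1}^{n}(S_k-S_{k-1})$. Setting $W:=(Y_1,\ldots,Y_{k-1},Z_{k+1},\ldots,Z_n)$ and applying the chain rule to $S_k=\I{U}{W}+\Ic{U}{Y_k}{W}$ and $S_{k-1}=\I{U}{W}+\Ic{U}{Z_k}{W}$, each increment collapses to $S_k-S_{k-1}=\Ic{U}{Y_k}{W}-\Ic{U}{Z_k}{W}$. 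It therefore suffices to show $\Ic{U}{Y_k}{W}\geq\Ic{U}{Z_k}{W}$ for every $k$.

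To obtain this I would expand both conditional informations as averages over the realizations $W=w$ and apply the less-noisy hypothesis coordinatewise. The decisive point is a memorylessness observation: since $(Y_k,Z_k)$ is produced from $X_k$ alone, one verifies that $(Y_k,Z_k)\perp(U,W)\mid X_k$, so that for each fixed $w$ the conditional law $P_{U,X_k,Y_k,Z_k\mid W=w}$ still has $X_k\to Y_k$ governed by $\W$, $X_k\to Z_k$ governed by $\V$, the Markov chain $U\markovDavid X_k\markovDavid(Y_k,Z_k)$, and $U$ of finite support. Hence $\W\succeq\V$ applies verbatim to this single-letter distribution and yields $\Ic{U}{Y_k}{W=w}\geq\Ic{U}{Z_k}{W=w}$; averaging over $w$ with weights $P_W(w)$ gives the required inequality, and summing over $k$ finishes the argument.

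The main obstacle, and essentially the only nonroutine step, is verifying the conditional channel and Markov structure used at the end: that conditioning on the mixed block $W$ neither alters the marginal channels $X_k\to Y_k$ and $X_k\to Z_k$ nor destroys the chain $U\markovDavid X_k\markovDavid(Y_k,Z_k)$, so that the single-letter definition is legitimately invoked. This reduces to the factorization $P(Y_k,Z_k\mid X^n,U,\{Y_j,Z_j\}_{j\neq k})=P(Y_k,Z_k\mid X_k)$, which follows from the product structure of the channel together with the overall Markov chain $U\markovDavid X^n\markovDavid(Y^n,Z^n)$; everything else is standard telescoping bookkeeping. It is worth noting the contrast with the more-capable case of Proposition~\ref{prop:tensorMC}, where the analogous coordinatewise swap does not reduce directly to single letters and one must instead invoke the Csisz\'ar-type inequality of Proposition~\ref{prop:csiszarTrick}.
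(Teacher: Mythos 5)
Your proposal is correct and is essentially the paper's own proof: the paper also swaps $Z$-coordinates for $Y$-coordinates one at a time and controls each increment $\Ic{U}{Y_k}{Y^{k-1},Z_{k+1}^n}-\Ic{U}{Z_k}{Y^{k-1},Z_{k+1}^n}$ by applying the single-letter less-noisy inequality to the conditional distribution given the realization of the remaining outputs, exactly as in your key step. The only difference is presentational: the paper peels off coordinates iteratively (discarding each nonnegative increment as it goes), whereas you organize the same increments as a clean telescoping sum $\sum_k (S_k - S_{k-1})$.
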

\begin{proof}
We start with a proof for the case $n=2$, which we then generalize to arbitrary $n\in\mathbb{N}$. Using the chain rule we can write for an arbitrary $U$ satisfying $U \markovDavid X^2 \markovDavid(Y^2,Z^2)$
\begin{align}
&\I{U}{Y_1,Y_2}- \I{U}{Z_1,Z_2} \nonumber \\
&\hspace{5mm}= \Ic{U}{Y_2}{Y_1}-\Ic{U}{Z_1}{Z_2} + \I{U}{Y_1} - \I{U}{Z_2}\\
&\hspace{5mm}= \Ic{U}{Y_2}{Y_1}-\Ic{U}{Z_1}{Z_2} +\I{U}{Y_1}-\I{U}{Z_2}+\I{U}{Y_1,Z_2}-\I{U}{Y_1,Z_2}\\
&\hspace{5mm}= \Ic{U}{Y_1}{Z_2}-\Ic{U}{Z_1}{Z_2} + \Ic{U}{Y_2}{Y_1}-\Ic{U}{Z_2}{Y_1}\\
 &\hspace{5mm}\geq 0, \label{eq:ineqCN}
\end{align}
where the inequality step can be justified as follows. Some given conditioning $Z_2=z_2$ specifies a probability distribution of $X_1$ given $Z_2=z_2$ and $U$ given $Z_2=z_2$. By definition of $\W\succeq \V$ it follows that $\I{U}{Y_1}\geq \I{U}{Z_1}$ for all $P_{U,X_1}$ such that $U \markovDavid X_1 \markovDavid (Y_1,Z_1)$ which then implies $\Ic{U}{Y_1}{Z_2}\geq\Ic{U}{Z_1}{Z_2}$. Using the same argument it can be verified that also  $\Ic{U}{Y_2}{Y_1}\geq\Ic{U}{Z_2}{Y_1}$ holds, which proves inequality \eqref{eq:ineqCN}.

For a general $n \in \mathbb{N}$, using the same techniques as above, we obtain
\begin{align}
&\I{U}{Y^{n-1},Y_n}- \I{U}{Z^{n-1},Z_n} \nonumber \\
&\hspace{3mm}=\I{U}{Y^{n-1}}+\Ic{U}{Y_n}{Y^{n-1}}-\I{U}{Z_n}-\Ic{U}{Z^{n-1}}{Z_n}\\
&\hspace{3mm}=\Ic{U}{Y_n}{Y^{n-1}}-\Ic{U}{Z^{n-1}}{Z_n}+\I{U}{Y^{n+1}}-\I{U}{Z_n}\nonumber\\
&\hspace{8mm}+\I{U}{Y^{n-1},Z_n}-\I{U}{Y^{n-1},Z_n}\\
&\hspace{3mm}= \Ic{U}{Y^{n-1}}{Z_n}-\Ic{U}{Z^{n-1}}{Z_n}+\Ic{U}{Y_n}{Y^{n-1}}-\Ic{U}{Z_n}{Y^{n-1}}\\
&\hspace{3mm}\geq \Ic{U}{Y^{n-1}}{Z_n}-\Ic{U}{Z^{n-1}}{Z_n} \label{eq:ineq1}\\
&\hspace{3mm}= \Ic{U}{Y^{n-2}}{Z_n} + \Ic{U}{Y_{n-1}}{Y^{n-2},Z_n}-\Ic{U}{Z_{n-1}}{Z_n}-\Ic{U}{Z^{n-2}}{Z_{n-1},Z_n}\\
&\hspace{3mm}= \Ic{U}{Y_{n-1}}{Y^{n-2},Z_n}-\Ic{U}{Z^{n-2}}{Z_{n-1},Z_n}+\Ic{U}{Y^{n-2}}{Z_n}-\Ic{U}{Z_{n-1}}{Z_n}\nonumber\\
& \hspace{8mm}+\Ic{U}{Y^{n-2},Z_{n-1}}{Z_n}-\Ic{U}{Y^{n-2},Z_{n-1}}{Z_n}\\
&\hspace{3mm}= \Ic{U}{Y^{n-2}}{Z_{n-1},Z_n}-\Ic{U}{Z^{n-2}}{Z_{n-1},Z_n}+\Ic{U}{Y_{n-1}}{Y^{n-2},Z_n} \nonumber \\
& \hspace{8mm}-\Ic{U}{Z_{n-1}}{Y^{n-2},Z_n}\\
&\hspace{3mm}\geq \Ic{U}{Y^{n-2}}{Z_{n-1},Z_n}-\Ic{U}{Z^{n-2}}{Z_{n-1},Z_n}. \label{eq:ineq2}
\end{align}
The equality steps follow by applying the chain rule. Inequality~\eqref{eq:ineq1} is valid since by assumption we have $\W \succeq \V$ which implies that $\Ic{U}{Y_n}{Y^{n-1}}\geq\Ic{U}{Z_n}{Y^{n-1}}$ as explained above. Inequality~\eqref{eq:ineq2} follows by similar arguments.
Continuing applying these steps we obtain
\begin{align}
\I{U}{Y^{n-1},Y_n}- \I{U}{Z^{n-1},Z_n} \geq \Ic{U}{Y_1}{Z_2^n}-\Ic{U}{Z_1}{Z_2^n} \geq 0,
\end{align}
which proves the assertion.
\end{proof}

\section{Universality for More Capable Sources} \label{sec:universalityMC}
We consider $n$ i.i.d.\ copies of a triple of correlated random variables $(X,Y,Z)^n$ described by $(P_{X,Y,Z})^n$. Let $\cY:\mathcal{X}^n \to \{0,1\}^{k_Y}$ be an optimal compressor for $X^n$ given the side information $Y^n$. Being optimal implies that
\begin{enumerate}[(a)]
\item \label{item:a} $\Hcmin{X^n}{\cY(X^n),Y^n}\leq \epsilon_n$ for $\epsilon_n \geq 0$ such that $\lim \limits_{n \to \infty} \epsilon_n =0$, since we want to be able to recover $X^n$ out of $(\cY(X^n),Y^n)$ perfectly in the limit $n\to \infty$. This can be justified as follows by the following argument
\begin{align}
\Perr{X^n}{\cY(X^n),Y^n} &= 1-\Pguess{X^n}{\cY(X^n),Y^n}\\
 &= 1-2^{-\Hcmin{X^n}{\cY(X^n),Y^n}},
\end{align}
where the first equality is by definition and the second equality is due to \cite[Theorem 1]{koenig09}.
\item \label{item:b}$ \lim \limits_{n \to \infty} \frac{k_Y}{n}=\Hc{X}{Y}$, since the compressor should achieve the optimal rate, given by Slepian and Wolf \cite{slepian73}.
\end{enumerate}
Using polar codes a possible optimal compressor for $X^n$ given the side information $Y^n$ has the form $\cY(X^n)=U[\mathcal{D}_{\epsilon}^n(X|Y)^{\setC}]$, where $U^n=G_n X^n$ and $\mathcal{D}_{\epsilon}^n(X|Y)$ as defined in \eqref{eq:DsetY}. The proof that this compressor is optimal, i.e.,\ fulfills (\ref{item:a}) and (\ref{item:b}) has been given in \cite{arikan10}.

Let $\cZ$ be an optimal compressor for $X^n$ given the side information $Z^n$.
In the same spirit as \c Sa\c so\u glu did for the channel coding scenario (cf.\ Propoposition~\ref{prop:sasoglu}) one can prove the following proposition.
\begin{myprop}\label{prop:sasogluSC}
Let $P_{X,Y}$ and $P_{X,Z}$ be two sources such that $P_{Y|X}$ is more capable than $P_{Z|X}$, then any optimal compressor $\cZ$ for $P_{X,Z}$ can be also used for $P_{X,Y}$, or more precisely 
\begin{equation}
\Perr{X^n}{\cZ(X^n),Y^n} \leq n \, \Perr{X^n}{\cZ(X^n),Z^n} + \Hb(\Perr{X^n}{\cZ(X^n),Z^n}),
\end{equation}
under the use of an optimal decoder.
\end{myprop}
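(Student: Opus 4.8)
The plan is to sandwich the $Y$-error between two conditional Shannon entropies, transfer the entropy from the $Y$-side to the $Z$-side using the \emph{more capable} hypothesis, and then convert back to the $Z$-error by Fano's inequality. Throughout write $W:=\cZ(X^n)$ for the compressed string. Because $W$ is a deterministic function of $X^n$, while $Y^n$ and $Z^n$ are produced by feeding $X^n$ into memoryless channels, the Markov chain $W\markovDavid X^n\markovDavid(Y^n,Z^n)$ holds; this is the observation that lets $W$ play the role of the auxiliary variable in the Csisz\'ar-type estimate below.

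Concretely, I would prove the chain
\[
\Perr{X^n}{W,Y^n}\ \leq\ \Hc{X^n}{W,Y^n}\ \leq\ \Hc{X^n}{W,Z^n}\ \leq\ \Hb\!\big(\Perr{X^n}{W,Z^n}\big)+n\,\Perr{X^n}{W,Z^n},
\]
which is exactly the claimed inequality. The leftmost step is an error-versus-entropy bound: for the optimal decoder the conditional error given a fixed outcome $(W,Y^n)=(w,y)$ equals $1-\max_x P_{X^n|W,Y^n}(x\,|\,w,y)$, and the elementary estimate $1-q\leq-\log q$ for $q\in(0,1]$ bounds this by $\Hcmin{X^n}{W=w,Y^n=y}\leq\Hh{X^n|W=w,Y^n=y}$; averaging over $(w,y)$ yields $\Perr{X^n}{W,Y^n}\leq\Hc{X^n}{W,Y^n}$. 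The rightmost step is Fano's inequality for reconstructing the binary-valued vector $X^n$ from $(W,Z^n)$, where $\log(2^n-1)<n$ produces the factor $n$.

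The middle step is the heart of the argument and the place I expect the real work to sit. Subtracting the common term $\Hc{X^n}{W}$, it is equivalent to the conditional mutual information ordering $\I{X^n}{Y^n|W}\geq\I{X^n}{Z^n|W}$. To obtain this I would first tensorize the hypothesis: since $P_{Y|X}$ is more capable than $P_{Z|X}$, Proposition~\ref{prop:tensorMC} gives $(P_{Y|X})^n\gg(P_{Z|X})^n$, i.e.\ the product channel delivering $Y^n$ is more capable than the one delivering $Z^n$. Applying Proposition~\ref{prop:csiszarTrick} to this pair of product channels with auxiliary variable $U=W$ (admissible precisely because of the Markov chain noted above) yields $\I{X^n}{Y^n|W}\geq\I{X^n}{Z^n|W}$, as required. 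The main obstacle is thus conceptual rather than computational: one must check that more capability---rather than a mere comparison of unconditional capacities---is exactly the property needed, since the inequality has to survive conditioning on the arbitrary side variable $W$, and one must verify the Markov condition that legitimizes the choice $U=W$. It is also worth noting that the two Fano-type bounds run in opposite directions, and that it is the less standard direction $\Perr{\cdot}{\cdot}\leq\Hc{\cdot}{\cdot}$, used on the $Y$-side, that makes the chain close.
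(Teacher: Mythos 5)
Your proposal is correct and follows essentially the same route as the paper's proof: Fano's inequality on the $Z$ side, the conditional mutual information inequality of Proposition~\ref{prop:csiszarTrick} combined with tensorization (Proposition~\ref{prop:tensorMC}) and the Markov chain $\cZ(X^n)\markovDavid X^n\markovDavid (Y^n,Z^n)$ for the middle step, and an error-versus-entropy bound on the $Y$ side. The only cosmetic difference is that you justify the $Y$-side step $\Perr{X^n}{\cZ(X^n),Y^n}\leq \Hc{X^n}{\cZ(X^n),Y^n}$ by a direct pointwise argument ($1-q\leq -\log q$, then averaging), whereas the paper invokes the guessing-probability/min-entropy identity of K\"onig et al.\ together with $1-e^{-x}\leq x$; the content is identical.
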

\begin{proof}
Using Fano's inequality we obtain
\begin{align}
n\, \Perr{X^n}{\cZ(X^n),Z^n} + \Hb(\Perr{X^n}{\cZ(X^n),Z^n}) &\geq \Hc{X^n}{\cZ(X^n),Z^n}\\
 &\geq \Hc{X^n}{\cZ(X^n),Y^n}\\
 & \geq 1-e^{-\Hc{X^n}{\cZ(X^n),Y^n}}\label{eq:easy}\\
 & \geq \Perr{X^n}{\cZ(X^n),Y^n}.
\end{align}
The second inequality follows by Proposition~\ref{prop:csiszarTrick} and the more capable relation that, according to Proposition~\ref{prop:tensorMC}, is preserved when considering $n$ i.i.d.\ copies of the source $P_{X,Y}$ respectively $P_{X,Z}$. Note that since $\cZ(X^n)$ is a deterministic function of $X^n$ the following Markov condition holds $\cZ(X^n) \markovDavid X^n \markovDavid (Y^n,Z^n)$. Inequality \eqref{eq:easy} uses the simple fact that $1-e^{-x}\leq x$ for $x\geq 0$. The final inequality can be justified as follows
\begin{align}
\Perr{X^n}{\cZ(X^n),Y^n} &= 1-\Pguess{X^n}{\cZ(X^n),Y^n}\\
 &= 1-2^{-\Hcmin{X^n}{\cZ(X^n),Y^n}}\\
 &\leq 1-2^{-\Hc{X^n}{\cZ(X^n),Y^n}},
\end{align}
where the first equality is by definition and the second equality is \cite[Theorem 1]{koenig09}.
\end{proof}

The following theorem is the main result of this section and implies that there exists a modified polar code, that inherits all the nice properties a standard polar code has, which is universal for a specific class of more capable sources. 

\begin{mythm} \label{thm:extension}
Consider two discrete memoryless sources $P_{X,Y}$ and $P_{X,Z}$ such that $P_{Y|X}\gg P_{Z|X}$ and $P_X$ is such that it maximizes $\I{X}{Y}-\I{X}{Z}$ for the given $P_{Y|X}$ and $P_{Z|X}$. Then for $\epsilon = O(2^{-n^{\beta}})$ with $\beta <\tfrac{1}{2}$, we have $\mathcal{D}_{\epsilon}^{n}(X|Y)^\setC$ $\scriptsize{\overset{\boldsymbol{\cdot}}{\subseteq}}$ $\mathcal{D}_{\epsilon}^{n}(X|Z)^\setC$.
\end{mythm}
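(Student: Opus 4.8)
The plan is to prove the equivalent statement obtained by complementation: since $\mathcal{D}_{\epsilon}^{n}(X|Y)^{\setC}\backslash\mathcal{D}_{\epsilon}^{n}(X|Z)^{\setC}=\mathcal{D}_{\epsilon}^{n}(X|Z)\backslash\mathcal{D}_{\epsilon}^{n}(X|Y)$, the claim $\mathcal{D}_{\epsilon}^{n}(X|Y)^{\setC}\overset{\boldsymbol{\cdot}}{\subseteq}\mathcal{D}_{\epsilon}^{n}(X|Z)^{\setC}$ is the same as $\mathcal{D}_{\epsilon}^{n}(X|Z)\overset{\boldsymbol{\cdot}}{\subseteq}\mathcal{D}_{\epsilon}^{n}(X|Y)$, i.e.\ that essentially every index that is ``good'' (low conditional entropy) for $Z$ is also good for $Y$. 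Writing $a_i:=\Hc{U_i}{U^{i-1},Y^n}$ and $b_i:=\Hc{U_i}{U^{i-1},Z^n}$, the goal is to show that the crossing set $\{i:b_i\le\epsilon,\ a_i>\epsilon\}$ has size $o(n)$.

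The crucial observation is that the maximality of $P_X$ upgrades the more-capable hypothesis to a \emph{local} less-noisy statement at $P_X$. Writing $g(Q):=\I{X}{Y}-\I{X}{Z}$ for the gap evaluated at input distribution $Q$ with the channels held fixed, for any $U$ with $U\markovDavid X\markovDavid(Y,Z)$ and $X$-marginal $P_X$ one has $\Ic{X}{Y}{U}-\Ic{X}{Z}{U}=\sum_u P_U(u)\,g(P_{X|U=u})\le g(P_X)$, because each $P_{X|U=u}$ is a feasible input and $P_X$ is the maximizer. Combining this with the chain-rule identities $\I{U}{Y}=\I{X}{Y}-\Ic{X}{Y}{U}$ and $\I{U}{Z}=\I{X}{Z}-\Ic{X}{Z}{U}$ yields $\I{U}{Y}\ge\I{U}{Z}$ for every such $U$---exactly the defining inequality of the less-noisy relation, but only at the single input distribution $P_X$. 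I would then tensorize this property by running the argument of Proposition~\ref{prop:tensorEssLN} verbatim: every invocation of the single-letter less-noisy inequality there is applied to one coordinate $X_j$ conditioned only on the channel outputs of the \emph{other} coordinates, and since $X^n$ is i.i.d.\ $P_X$ the conditional marginal of $X_j$ stays $P_X$, so the local statement suffices. This gives $\I{U}{Y^n}\ge\I{U}{Z^n}$ for all $U\markovDavid X^n\markovDavid(Y^n,Z^n)$, and taking $U=U^i$ produces the aggregate (prefix-sum) domination $\Hc{U^i}{Y^n}\le\Hc{U^i}{Z^n}$, i.e.\ $\sum_{j\le i}a_j\le\sum_{j\le i}b_j$ for every $i$.

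Finally I would combine this with strong polarization: for $\epsilon=O(2^{-n^{\beta}})$ with $\beta<\tfrac12$, all but $o(n)$ of the $a_i$ (and of the $b_i$) lie within $\epsilon$ of $\{0,1\}$, so up to an $o(n)$ error the indices split into four polarized types whose counts are pinned down by the rate identities $\sum_i a_i=n\Hc{X}{Y}$ and $\sum_i b_i=n\Hc{X}{Z}$. The main obstacle is precisely the passage from the aggregate inequality to the index-wise containment. Conditioning on the polar prefix $U^{i-1}$ destroys the i.i.d.\ product structure that the maximality argument relies on, so a termwise domination $a_i\le b_i$ is \emph{not} available; moreover the prefix-sum domination by itself does not bound the crossing set, since a ``ballot''-type path with partial sums staying nonnegative can still accumulate linearly many down-steps. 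I therefore expect the resolution to require exploiting the full family of subset inequalities $\Hc{U_{\mathcal{S}}}{Y^n}\le\Hc{U_{\mathcal{S}}}{Z^n}$ together with the super-polynomially small $\epsilon$ to force the crossing set to be $o(n)$, and making this last step rigorous is the crux of the proof.
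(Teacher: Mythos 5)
You do not have a proof: your write-up correctly derives several auxiliary facts and then stops, by your own admission, at ``the crux.'' The parts you do prove are right, and they are genuinely different from what the paper does. The upgrade of more-capable-at-the-maximizer to a less-noisy statement at the single input $P_X$ (via $\Ic{X}{Y}{U}-\Ic{X}{Z}{U}=\sum_u P_U(u)\,g(P_{X|U=u})\le g(P_X)$, where $g$ is your mutual-information gap functional) is valid, and the tensorization does go through, because every inequality step in the proof of Proposition~\ref{prop:tensorEssLN} applies the single-letter statement to one coordinate conditioned only on channel outputs of \emph{other} coordinates, so the i.i.d.\ structure keeps that coordinate's marginal equal to $P_X$. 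This yields $\Hc{U[\mathcal{S}]}{Y^n}\le \Hc{U[\mathcal{S}]}{Z^n}$ for every $\mathcal{S}\subseteq[n]$, hence your prefix-sum domination. But your own objection to the prefix-sum route (the ballot-path example) is correct, and the fallback you gesture at---the full family of subset inequalities---cannot close the gap either, for a direction-of-inequality reason: writing $b_i=\Hc{U_i}{U^{i-1},Z^n}$, the chain rule gives
\begin{equation*}
\Hc{U[\mathcal{S}]}{Z^n}=\sum_{i\in\mathcal{S}}\Hc{U_i}{U[\mathcal{S}\cap[i-1]],Z^n}\;\ge\;\sum_{i\in\mathcal{S}}b_i ,
\end{equation*}
so knowing that $\Hc{U[\mathcal{S}]}{Z^n}$ is nearly maximal places no lower bound on $\sum_{i\in\mathcal{S}}b_i$: a bit can be uniform given $Z^n$ and the other bits indexed by $\mathcal{S}$, yet be determined by earlier bits lying \emph{outside} $\mathcal{S}$. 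Since controlling the crossing set amounts exactly to lower-bounding $\sum_{i\in\mathcal{A}}b_i$ for $\mathcal{A}=\mathcal{D}_{\epsilon}^n(X|Y)^{\setC}$, the step you label as the crux is precisely the content of the theorem, and it is missing.

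For comparison, the paper makes no index-wise or aggregate entropy comparison at all; it argues operationally. The bits $U[\mathcal{D}_{\epsilon}^n(X|Y)^{\setC}]$ serve as the key of the private channel coding protocol of \cite{sutter13}, whose rate $R=\lim\frac1n\Hc{X^n}{\cY(X^n),Z^n}$ is bounded below by $\Hc{X}{Z}-\Hc{X}{Y}$ (cardinality bound) and above by the secrecy capacity, which equals $\Hc{X}{Z}-\Hc{X}{Y}$ exactly because the channels are more-capable comparable and $P_X$ is the gap maximizer \cite{elgamal12}. This pins down $\lim\frac1n\Hc{U[\mathcal{D}_{\epsilon}^n(X|Y)^{\setC}]}{Z^n}=\Hc{X}{Y}=\lim\frac1n\bigl|\mathcal{D}_{\epsilon}^n(X|Y)^{\setC}\bigr|$, from which the containment is concluded. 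Interestingly, your machinery recovers this entropy identity without invoking any wiretap theorem: take $\mathcal{S}=\mathcal{D}_{\epsilon}^n(X|Y)^{\setC}$ and use $\Hc{U[\mathcal{S}]}{Z^n}\ge\Hc{U[\mathcal{S}]}{Y^n}\ge\sum_{i\in\mathcal{S}}\Hc{U_i}{U^{i-1},Y^n}\ge|\mathcal{S}|-o(n)$, the last step by polarization (this is where $\epsilon=O(2^{-n^{\beta}})$ is needed). So you arrive at the paper's penultimate display by a self-contained and arguably cleaner route. However, the passage from that joint-entropy identity to the index-wise containment---the very passage you flag as unresolved, and which your counterexample reasoning shows is not a formal consequence of the identity alone---is asserted by the paper in a single line; whatever one thinks of that assertion, your proposal explicitly leaves the theorem unproven at this point.
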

\begin{proof}
A polar code for optimal Slepian-Wolf coding of $X^n$ given $Y^n$ defines a compressor $\cY:\mathcal{X}^n \to \{0,1\}^{k_Y}$, such that $\cY(X^n)=U^n[\mathcal{D}_{\epsilon}^n(X|Y)^{\setC}]$, where $U^n=G_n X^n$ and $\mathcal{D}_{\epsilon}^n(X|Y)$ is as defined in \eqref{eq:DsetY}. Since $\cY(X^N)$ is a deterministic function of $X^n$ we can write
\begin{align}
\lim \limits_{n\to \infty} \frac{1}{n} \Hc{X^n}{c_Y(X^n),Z^n} &= \lim \limits_{n \to \infty} \frac{1}{n} \left(\Hc{X^n}{Z^n} - \Hc{\cY(X^n)}{Z^n}\right) \label{eq:middle}\\
&= \Hc{X}{Z} - \lim \limits_{n \to \infty} \frac{1}{n} \Hc{\cY(X^n)}{Z^n}.\label{eq:finalstep}
\end{align}
The rate of the private channel coding protocol introduced in \cite[Theorem 7]{sutter13} is given as
\begin{equation}
R =\lim \limits_{n\to \infty} \frac{1}{n} \Hc{X^n}{\cY(X^n),Z^n}. \label{eq:rate}
\end{equation}
Using \eqref{eq:finalstep}, \eqref{eq:rate} and the cardinality bound for the entropy, we obtain a lower bound for the rate
\begin{equation}
R \geq \Hc{X}{Z}-\Hc{X}{Y}.
\end{equation}
However, since we are in a more capable scenario and we assume that $P_X$ is such that it maximizes $\Hc{X}{Z}-\Hc{X}{Y}$, it is known that this bound is equal to the secrecy capacity \cite[Page 551]{elgamal12} and hence $R=\Hc{X}{Z}-\Hc{X}{Y}$ holds, which implies that
\begin{align}
\lim \limits_{n \to \infty} \frac{1}{n} \Hc{\cY(X^n)}{Z^n} &= \Hc{X}{Y} \label{eq:1}\\
&= \lim \limits_{n \to \infty} \frac{1}{n} \left|\cY \right|. \label{eq:2}
\end{align}
This proves that $\cY$ extracts an essentially maximal entropy set also for side information $Z^n$. Recall that the optimal decompressors using polar codes have the form $\cY(X^n)=U^n[\mathcal{D}_{\epsilon}^n(X|Y)^{\setC}]$ and $\cZ(X^n)=U^n[\mathcal{D}_{\epsilon}^n(X|Z)^{\setC}]$ as explained above. Using \eqref{eq:1} and \eqref{eq:2} gives
\begin{align}
\lim \limits_{n \to \infty} \frac{1}{n} \Hc{U^n[\mathcal{D}_{\epsilon}^n(X|Y)^{\setC}]}{Z^n} &= \Hc{X}{Y}\\
&= \lim \limits_{n \to \infty} \frac{1}{n} \left| \mathcal{D}_{\epsilon}^n(X|Y)^{\setC} \right|,
\end{align}
which implies that $\mathcal{D}_{\epsilon}^n(X|Y)^{\setC}$ $\scriptsize{\overset{\boldsymbol{\cdot}}{\subseteq}}$ $\mathcal{D}_{\epsilon}^n(X|Z)^{\setC}$.
\end{proof}
Theorem~\ref{thm:extension} directly implies that for two discrete memoryless sources $P_{X,Y}$ and $P_{X,Z}$ such that $P_{Y|X}\gg P_{Z|X}$ and $P_X$ such that it maximizes $\I{X}{Y}-\I{X}{Z}$ for the given $P_{Y|X}$ and $P_{Z|X}$, for every polar code constructed for $P_{X,Z}$ there exists a slightly modified polar code---having the same rate, the same encoding and decoding complexity and the same error rate---that can be used reliably for $P_{X,Y}$ under SC decoding.  More precisely, given a polar code characterized by $\mathcal{D}_{\epsilon}^n(X|Z)^\setC$, we can define a modified polar code that is characterized by $\tilde{\mathcal{D}_{\epsilon}^n}(X|Z)^\setC :=\mathcal{D}_{\epsilon}^n(X|Z)^\setC \cup \mathcal{A}_{\epsilon}^n$, where $\mathcal{A}_{\epsilon}^n:=\mathcal{D}_{\epsilon}^n(X|Y)^\setC \backslash \mathcal{D}_{\epsilon}^n(X|Z)^\setC$. Note that these two codes have the same rate since $\lim_{n\to\infty} \tfrac{1}{n}| \mathcal{D}_{\epsilon}^n(X|Z)^\setC| = \lim_{n\to \infty} \tfrac{1}{n} | \tilde{\mathcal{D}_{\epsilon}^n}(X|Z)^\setC|=\Hc{X}{Z}$ as predicted by Theorem~\ref{thm:extension}. Moreover, we can use the same encoding and decoding for the modified polar code as for the standard polar code, such that it inherits its low complexity and error rate.
The modified polar code described by $ \tilde{\mathcal{D}_{\epsilon}^n}(X|Z)^\setC$ can be used reliably for $P_{X,Y}$ using SC decoding since $ \mathcal{D}_{\epsilon}^n(X|Y)^\setC\subseteq \tilde{\mathcal{D}_{\epsilon}^n}(X|Z)^\setC$.

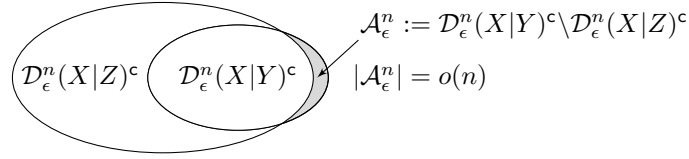
\begin{figure}[!htb]
%\tikzsetnextfilename{Extractor}
\centering
\def \x{1.0}
\def \s{1.0}

\begin{tikzpicture}[scale=1,auto, node distance=1cm,>=latex']
	
    % PA

 \draw[fill=gray!30] (0,0) ellipse (1.2cm and 0.7cm); %less noisy
% \draw[] (-\s,0) ellipse (1.5cm and 0.8cm); %more capable
%  \draw[] (0,0) ellipse (1cm and 0.6cm); %less noisy
\begin{scope}
\clip (0,0) ellipse (1.2cm and 0.7cm);
\clip (-\s,0) ellipse (2cm and 1.0cm);
\fill[white] (-2,1.5)
rectangle (2,-1.5);
\end{scope}
 \draw[] (-\s,0) ellipse (2cm and 1.0cm); %more capable
  \draw[] (0,0) ellipse (1.2cm and 0.7cm); %less noisy

%    \node at (4,0.5) {essentially less noisy};
    \node at (0,0) {$\mathcal{D}_{\epsilon}^n(X|Y)^{\setC}$};    
    \node at (-2.1,0.0) {$\mathcal{D}_{\epsilon}^n(X|Z)^{\setC}$};       
 %arrows
% \draw[->] (2.35,0.45) -- (1.75,0);   
 \draw[->] (1.6,0.5) -- (1.05,0);   
     \node at (3.8,0.7) {$\mathcal{A}_{\epsilon}^n:=\mathcal{D}_{\epsilon}^n(X|Y)^\setC \backslash \mathcal{D}_{\epsilon}^n(X|Z)^\setC$};  
     \node at (2.42,0.0) {$|\mathcal{A}_{\epsilon}^n|=o(n)$};  
\end{tikzpicture}
\caption{\small Construction of a modified polar code that is universal for two sources $P_{X,Y}$ and $P_{X,Z}$ where $P_{Y|X} \gg P_{Z|X}$ and $P_X$ is such that it maximizes $\I{X}{Y}-\I{X}{Z}$ for the given $P_{Y|X}$ and $P_{Z|X}$.}
\label{fig:SC}
\end{figure}

%%%%%%%%%%%%%%%%%%%%%%%%%%%%%%%%%%%%%%%%%%%%%
%%%%%%%%%%%%%%%%%%%%%%%%%%%%%%%%%%%%%%%%%%%%%
%%%%%%%%%%%%%%%%%%%%%%%%%%%%%%%%%%%%%%%%%%%%%
\section{Universality for More Capable Channels} \label{sec:universalityMCchannels}
Consider two DMCs $\W:\mathcal{X} \to \mathcal{Y}$ and $\V:\mathcal{X} \to \mathcal{Z}$ where $\W$ is assumed to be more capable than $\V$. In this section, we first recall a known result, that any optimal code built for $\V$ will perform also well when used over $\W$. Theorem~\ref{thm:extension}, then implies a stronger result for the use of polar codes and a particular input distribution, that for any polar code for $\V$ and a particular input distribution there exists a modified polar code---having the same rate, the same encoding and decoding complexity and the same error rate---that can be used reliably for $\W$ under SC decoding.
Using Fano's inequality \c Sa\c so\u glu proved the following result.
\begin{myprop}[\cite{sasoglu_phd}] \label{prop:sasoglu}
Consider two DMCs $\W$ and $\V$ and let $p_{e,\W}$ denote the average error probability of a code of length $n$ over a DMC $\W$, under optimal decoding. If $\W$ is more capable than $\V$, then
\begin{equation}
p_{e,\W} \leq n p_{e,\V} + \Hb(p_{e,\V}).
\end{equation}
\end{myprop}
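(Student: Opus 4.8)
The plan is to follow the same template as the proof of Proposition~\ref{prop:sasogluSC}, now with the transmitted codeword $X^n$ (uniform over the codebook) playing the role of the quantity to be recovered and the two channel outputs $Y^n=\W^n X^n$ and $Z^n=\V^n X^n$ playing the roles of the two pieces of side information. Assuming distinct codewords, recovering the message is equivalent to recovering $X^n$, so under optimal (MAP) decoding $p_{e,\W}=\Perr{X^n}{Y^n}$ and $p_{e,\V}=\Perr{X^n}{Z^n}$. It then suffices to establish the chain
\begin{equation}
n\, p_{e,\V} + \Hb(p_{e,\V}) \;\geq\; \Hc{X^n}{Z^n} \;\geq\; \Hc{X^n}{Y^n} \;\geq\; p_{e,\W}.
\end{equation}

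First I would obtain the leftmost inequality from Fano's inequality applied to decoding $X^n$ from $Z^n$: the support of $X^n$ is the codebook, of size at most $2^n$ for a binary input alphabet, so its logarithm is bounded by $n$ and Fano yields $\Hc{X^n}{Z^n}\le \Hb(p_{e,\V}) + n\,p_{e,\V}$ (for a larger alphabet this factor becomes $n\log|\mathcal{X}|$).

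The central inequality $\Hc{X^n}{Z^n}\ge\Hc{X^n}{Y^n}$ is where the more-capable hypothesis enters, and this is precisely the step that is simpler here than in Proposition~\ref{prop:sasogluSC}. Writing $\Hc{X^n}{Z^n}=\Hh{X^n}-\I{X^n}{Z^n}$ and similarly for $Y^n$, the common term $\Hh{X^n}$ cancels, so it is enough to show $\I{X^n}{Y^n}\ge\I{X^n}{Z^n}$ for the codeword distribution. This is immediate from $\W\gg\V$ once the more-capable relation is tensorized via Proposition~\ref{prop:tensorMC} to $\W^n\gg\V^n$ and the definition of more capable is applied to $P_{X^n}$. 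Because there is no conditioning variable present, Csisz\'ar's trick (Proposition~\ref{prop:csiszarTrick}) is not required here, unlike in the source-coding case.

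Finally, for the rightmost inequality I would invoke $\Pguess{X^n}{Y^n}=2^{-\Hcmin{X^n}{Y^n}}$ from \cite[Theorem~1]{koenig09} together with $\Hcmin{X^n}{Y^n}\le\Hc{X^n}{Y^n}$ and the elementary bounds $2^{-x}\ge e^{-x}$ and $1-e^{-x}\le x$, giving $p_{e,\W}=1-2^{-\Hcmin{X^n}{Y^n}}\le 1-e^{-\Hc{X^n}{Y^n}}\le\Hc{X^n}{Y^n}$. The only points needing care are the bookkeeping in the Fano step (the stated factor $n$ presumes a binary input) and the observation that optimality of the decoders identifies $p_{e,\W}$ and $p_{e,\V}$ with the corresponding $\Perr{\cdot}{\cdot}$ quantities; I do not expect a genuine obstacle, since the sole substantive ingredient is the single more-capable step supplied by Proposition~\ref{prop:tensorMC}.
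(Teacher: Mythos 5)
Your proof is correct: the Fano step, the tensorized more-capable step via Proposition~\ref{prop:tensorMC} applied to the (non-product) uniform-over-codebook distribution $P_{X^n}$, and the conversion from conditional entropy back to error probability via $\Pguess{X^n}{Y^n}=2^{-\Hcmin{X^n}{Y^n}}$ and $1-e^{-x}\leq x$ all go through, and you rightly note that the unconditioned more-capable definition suffices here, with no need for Proposition~\ref{prop:csiszarTrick}. The paper itself gives no argument for Proposition~\ref{prop:sasoglu} beyond citing \cite[Chapter 7]{sasoglu_phd}, but your proof is precisely the channel-coding instance of the template the paper uses for Proposition~\ref{prop:sasogluSC}, so it matches the intended approach (and your caveat that the factor $n$ presumes binary inputs, becoming $n\log|\mathcal{X}|$ otherwise, is consistent with the paper's own footnote on non-binary DMCs).
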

\begin{proof}
The proof can be found in \cite[Chapter 7]{sasoglu_phd}.\footnote{\c Sa\c so\u glu formulated the statement for binary-input DMCs, however it can be verified that the proof remains valid for non-binary DMCs (having finite input alphabets).}
\end{proof}

\begin{mycor} \label{cor:extensionChannel}
Consider two DMCs $\W:\mathcal{X}\to \mathcal{Y}$ and $\V:\mathcal{X}\to \mathcal{Z}$ where $\W \gg \V$. Let the input distribution $P_X$ be such that it maximizes $\I{X}{Y}-\I{X}{Z}$. Then for $\epsilon = O(2^{-n^{\beta}})$ with $\beta < \tfrac{1}{2}$, we have $\mathcal{D}_{\epsilon}^n(X|Z)$ $\scriptsize{\overset{\boldsymbol{\cdot}}{\subseteq}}$ $\mathcal{D}_{\epsilon}^n(X|Y)$. 
\end{mycor}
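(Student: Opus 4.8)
The plan is to deduce this directly from Theorem~\ref{thm:extension} by changing viewpoint from channels to sources and then complementing. First I would fix the maximizing input distribution $P_X$ and form the joint law $P_{X,Y,Z}$ whose marginals are $P_{X,Y}=P_X\W$ and $P_{X,Z}=P_X\V$; since $Y^n=\W^n X^n$ and $Z^n=\V^n X^n$ act on i.i.d.\ inputs, this is precisely a discrete memoryless source of the kind considered in Theorem~\ref{thm:extension}. Under this identification the conditional channels are $P_{Y|X}=\W$ and $P_{Z|X}=\V$, so the hypothesis $\W\gg\V$ of the corollary is literally the hypothesis $P_{Y|X}\gg P_{Z|X}$ of the theorem, and the requirement that $P_X$ maximize $\I{X}{Y}-\I{X}{Z}$ is identical in both statements. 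Hence the theorem applies verbatim and gives $\mathcal{D}_{\epsilon}^n(X|Y)^\setC$ $\scriptsize{\overset{\boldsymbol{\cdot}}{\subseteq}}$ $\mathcal{D}_{\epsilon}^n(X|Z)^\setC$ in the same regime $\epsilon=O(2^{-n^\beta})$ with $\beta<\tfrac12$.

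The remaining step is to pass from this essential containment of complements to the claimed essential containment of the sets themselves. Here I would invoke the elementary identity $A\setminus B = B^\setC\setminus A^\setC$, which yields $|A\setminus B|=|B^\setC\setminus A^\setC|$ for all $A,B\subseteq[n]$ and therefore shows that essential containment is preserved under complementation with the direction reversed: $\mathcal{A}$ $\scriptsize{\overset{\boldsymbol{\cdot}}{\subseteq}}$ $\mathcal{B}$ if and only if $\mathcal{B}^\setC$ $\scriptsize{\overset{\boldsymbol{\cdot}}{\subseteq}}$ $\mathcal{A}^\setC$. Applying this with $\mathcal{A}=\mathcal{D}_{\epsilon}^n(X|Y)^\setC$ and $\mathcal{B}=\mathcal{D}_{\epsilon}^n(X|Z)^\setC$ turns the theorem's conclusion into $\mathcal{D}_{\epsilon}^n(X|Z)$ $\scriptsize{\overset{\boldsymbol{\cdot}}{\subseteq}}$ $\mathcal{D}_{\epsilon}^n(X|Y)$, which is exactly the assertion.

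Since the corollary inherits all of its analytic content from Theorem~\ref{thm:extension}, I do not expect a genuine obstacle; the only points needing care are bookkeeping. Specifically, I would confirm that the more-capable relation between the abstract channels $\W$ and $\V$ coincides with the more-capable relation between the conditional marginals $P_{Y|X}$ and $P_{Z|X}$ of the induced source (it does, since both are defined by $\I{X}{Y}\ge\I{X}{Z}$ for all input distributions), and that the containment flips in the correct direction under complementation---an easy place to slip a sign-type error. No fresh use of Fano's inequality, of Proposition~\ref{prop:csiszarTrick}, or of the secrecy-capacity characterization is required beyond what already entered the proof of the theorem.
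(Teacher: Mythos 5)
Your proposal is correct and follows the paper's own route: the paper derives this statement as an immediate consequence of Theorem~\ref{thm:extension}, exactly as you do. Your explicit verification of the channel-to-source identification and of the complementation identity $A\setminus B = B^{\setC}\setminus A^{\setC}$ (which flips the direction of essential containment) simply spells out the bookkeeping the paper leaves implicit.
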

\begin{proof}
This is an immediate corollary out of Theorem~\ref{thm:extension}.
\end{proof}
Corollary~\ref{cor:extensionChannel} implies that for any polar code for $\V$ with respect to an input distribution $P_X$ that maximizes $\I{X}{Y}-\I{X}{Z}$, there exists a modified polar code having the same properties (i.e.,\ the same rate, the same encoder and decoder, and the same error rate), that can be used reliably for $\W$ under SC decoding. More precisely, given a polar code for $\V$ characterized by $\mathcal{D}_{\epsilon}^n(X|Z)$ there exists a slightly modified polar code that is described by $\tilde{\mathcal{D}_{\epsilon}^n}(X|Z):=\mathcal{D}_{\epsilon}^n(X|Z)\backslash \mathcal{B}_{\epsilon}^n$, where $\mathcal{B}_{\epsilon}^n:=\mathcal{D}_{\epsilon}^n(X|Z) \backslash \mathcal{D}_{\epsilon}^n(X|Y)$, that has the same same rate since $\lim_{n \to \infty} \tfrac{1}{n}|\mathcal{D}_{\epsilon}^n(X|Y)|=\lim_{n \to \infty} \tfrac{1}{n}|\tilde{\mathcal{D}_{\epsilon}^n}(X|Y)|=1-\Hc{X}{Y}$ as predicted by Corollary~\ref{cor:extensionChannel}. In addition, the same encoder and decoder as for the standard polar code can be used also for the modified polar code such that it inherits the low complexity and error rate. The modified polar code described by $\tilde{\mathcal{D}_{\epsilon}^n}(X|Z)$ can be used reliably for $\W$ under SC decoding since $\tilde{\mathcal{D}_{\epsilon}^n}(X|Z) \subseteq \mathcal{D}_{\epsilon}^n(X|Y)$.

\begin{figure}[!htb]
%\tikzsetnextfilename{Extractor}
\centering
\def \x{1.0}
\def \s{1.0}

\begin{tikzpicture}[scale=1,auto, node distance=1cm,>=latex']
	
    % PA

 \draw[fill=gray!30] (0,0) ellipse (1.2cm and 0.7cm); %less noisy
% \draw[] (-\s,0) ellipse (1.5cm and 0.8cm); %more capable
%  \draw[] (0,0) ellipse (1cm and 0.6cm); %less noisy
\begin{scope}
\clip (0,0) ellipse (1.2cm and 0.7cm);
\clip (-\s,0) ellipse (2cm and 1.0cm);
\fill[white] (-2,1.5)
rectangle (2,-1.5);
\end{scope}
 \draw[] (-\s,0) ellipse (2cm and 1.0cm); %more capable
  \draw[] (0,0) ellipse (1.2cm and 0.7cm); %less noisy

%    \node at (4,0.5) {essentially less noisy};
    \node at (0,0) {$\mathcal{D}_{\epsilon}^n(X|Z)$};    
    \node at (-2.1,0.0) {$\mathcal{D}_{\epsilon}^n(X|Y)$};       
 %arrows
% \draw[->] (2.35,0.45) -- (1.75,0);   
 \draw[->] (1.6,0.5) -- (1.05,0);   
     \node at (3.8,0.7) {$\mathcal{B}_{\epsilon}^n:=\mathcal{D}_{\epsilon}^n(X|Z) \backslash \mathcal{D}_{\epsilon}^n(X|Y)$};  
     \node at (2.56,0.0) {$|\mathcal{B}_{\epsilon}^n|=o(n)$};  
\end{tikzpicture}
\caption{\small Construction of a modified polar code that is universal for two DMCs $\W:\mathcal{X}\to \mathcal{Y}$ and $\V:\mathcal{X}\to\mathcal{Z}$, where $\W \gg \V$ for an input distribution $P_X$ that it maximizes $\I{X}{Y}-\I{X}{Z}$.}
\label{fig:CC}
\end{figure}
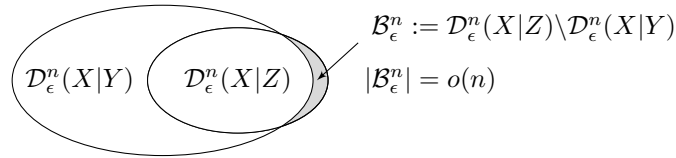

%%%%%%%%%%%%%%%%%%%%%%%%%%%%%%%%%%%%%%%%%%%%%
%%%%%%%%%%%%%%%%%%%%%%%%%%%%%%%%%%%%%%%%%%%%%
%%%%%%%%%%%%%%%%%%%%%%%%%%%%%%%%%%%%%%%%%%%%%
\section{Universality for less noisy channels} \label{sec:UniversalityPC}

Let $\W:\mathcal{X}\to \mathcal{Y}$ and $\V:\mathcal{X}\to \mathcal{Z}$ be two DMCs such that $\V$ is a degraded version of $\W$. It is known that the polar code constructed for $\V$ can be used reliably for communication over $\W$ using SC decoding \cite{arikan09,hassani09}. We next generalize this observation by showing that a less noisy relation is already sufficient to make this statement. Note that we assume that the decoder has full knowledge about the precise channel statistics.
\begin{mythm} \label{thm:LNpolar}
Let $\W:\mathcal{X} \to \mathcal{Y}$ and $\V:\mathcal{X}\to \mathcal{Z}$ be two DMCs such that $\W \succeq \V$, then for any $\epsilon \in (0,1)$, $n=2^k$, $k\in \mathbb{N}$ we have $\mathcal{D}_{\epsilon}^n(X|Z) \subseteq \mathcal{D}_{\epsilon}^n(X|Y)$.
\end{mythm}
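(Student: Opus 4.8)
The plan is to prove the set inclusion by establishing the pointwise comparison of the synthetic-channel entropies, namely that for every index $i\in[n]$,
\[
\Hc{U_i}{U^{i-1},Y^n} \leq \Hc{U_i}{U^{i-1},Z^n}.
\]
Once this holds, the inclusion $\mathcal{D}_{\epsilon}^n(X|Z) \subseteq \mathcal{D}_{\epsilon}^n(X|Y)$ is immediate: if $i \in \mathcal{D}_{\epsilon}^n(X|Z)$ then $\Hc{U_i}{U^{i-1},Z^n}\leq \epsilon$, and the comparison forces $\Hc{U_i}{U^{i-1},Y^n}\leq \epsilon$, i.e.\ $i \in \mathcal{D}_{\epsilon}^n(X|Y)$.

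First I would rewrite the entropy comparison in terms of mutual information. Since $\Hc{U_i}{U^{i-1},Y^n} = \Hc{U_i}{U^{i-1}} - \Ic{U_i}{Y^n}{U^{i-1}}$ and likewise for $Z^n$, and the term $\Hc{U_i}{U^{i-1}}$ is common to both (it does not involve the channel outputs), the desired inequality is equivalent to
\[
\Ic{U_i}{Y^n}{U^{i-1}} \geq \Ic{U_i}{Z^n}{U^{i-1}}.
\]

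The heart of the argument is to obtain this from the tensorized less noisy relation. By Proposition~\ref{prop:tensorEssLN}, $\W\succeq\V$ implies $\W^n\succeq\V^n$, i.e.\ $\I{\tilde U}{Y^n}\geq \I{\tilde U}{Z^n}$ for every $P_{\tilde U, X^n}$ with $\tilde U \markovDavid X^n \markovDavid (Y^n,Z^n)$. I would apply this conditionally, for each fixed value $U^{i-1}=u^{i-1}$: take the auxiliary variable to be $U_i$ and the input law to be $P_{X^n|U^{i-1}=u^{i-1}}$. The Markov chain $U_i \markovDavid X^n \markovDavid (Y^n,Z^n)$ continues to hold in this conditional world because both $U_i$ and the conditioning event $U^{i-1}=u^{i-1}$ are deterministic functions of $X^n$ (recall $U^n=G_nX^n$), while the product channels $\W^n,\V^n$ act on $X^n$ alone; thus given $X^n$ the outputs $(Y^n,Z^n)$ remain independent of $U_i$ irrespective of the conditioning. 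Less noisiness then yields $\Ic{U_i}{Y^n}{U^{i-1}=u^{i-1}} \geq \Ic{U_i}{Z^n}{U^{i-1}=u^{i-1}}$, and averaging over $u^{i-1}$ with weights $P(u^{i-1})$ gives exactly $\Ic{U_i}{Y^n}{U^{i-1}} \geq \Ic{U_i}{Z^n}{U^{i-1}}$.

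The main obstacle is the conditioning on $U^{i-1}$, which sits on the input side and is correlated with $X^n$. A naive termwise use of the chain rule, comparing $\I{U^i}{Y^n}-\I{U^{i-1}}{Y^n}$ against $\I{U^i}{Z^n}-\I{U^{i-1}}{Z^n}$, fails: the tensorized relation makes each of $\I{U^i}{Y^n}\geq\I{U^i}{Z^n}$ and $\I{U^{i-1}}{Y^n}\geq\I{U^{i-1}}{Z^n}$ nonnegative, but gives no control over the sign of their difference. The correct device is precisely that $\W^n\succeq\V^n$ holds for \emph{every} input distribution, so one may freeze $U^{i-1}=u^{i-1}$, absorb it into a modified input law $P_{X^n|U^{i-1}=u^{i-1}}$, retain $U_i$ as the auxiliary variable, invoke less noisiness conditionally, and only then average. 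Checking that the Markov condition survives this conditioning is the one point that genuinely requires care.
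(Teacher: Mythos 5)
Your proposal is correct and follows essentially the same route as the paper: both reduce the claim to the per-index comparison $\Hc{U_i}{U^{i-1},Y^n}\leq\Hc{U_i}{U^{i-1},Z^n}$, invoke Proposition~\ref{prop:tensorEssLN} to tensorize the less noisy relation, and then apply it conditionally on $U^{i-1}=u^{i-1}$ using the fact that the Markov chain survives because $U^n=G_nX^n$ is a deterministic function of the channel input. The only cosmetic difference is that the paper packages the conditioning step by taking $U^i$ as the auxiliary variable (so that $U^{i-1}\markovDavid U^i\markovDavid X^n\markovDavid(Y^n,Z^n)$ and the Csisz\'ar-style conditioning of Proposition~\ref{prop:csiszarTrick} applies directly), whereas you take $U_i$ itself and spell out the freeze-and-average argument explicitly.
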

\begin{proof}
Recall that the sets $\mathcal{D}_{\epsilon}^n(X|Y)$ and $\mathcal{D}_{\epsilon}^n(X|Z)$ are defined in \eqref{eq:DsetY} and \eqref{eq:DsetZ}. Let $U^n=G_n X^n$, where $G_n$ denotes the polarization transform and $P_{X^n}=(P_X)^n$. Using Proposition~\ref{prop:tensorEssLN} we have $\Hc{U_1}{Y^n} \leq \Hc{U_1}{Z^n}$ by definition of $\W^n \succeq \V^n$.
It thus remains to prove that $\Hc{U_i}{Y^n,U^{i-1}} \leq \Hc{U_i}{Z^n,U^{i-1}}$ for all $2\leq i\leq n$. Consider the following Markov chain $U^{i-1}\markovDavid U^i\markovDavid X^n \markovDavid (Y^n,Z^n)$ for the $U^{i-1}$ and $U^i$ as introduced above.\footnote{Note that it is easy to verify that the Markov condition is satisfied.} We then have
\begin{align}
\Hc{U_i}{Y^n,U^{i-1}} &=\Hc{U^{i}}{Y^n,U^{i-1}}\\
&\leq \Hc{U^{i}}{Z^n,U^{i-1}}\\
& = \Hc{U_i}{Z^n,U^{i-1}},
\end{align}
where the inequality step uses Proposition~\ref{prop:csiszarTrick} together with the fact that $\W^n \succeq \V^n$ which is ensured by Proposition~\ref{prop:tensorEssLN} and the assumption that $\W \succeq \V$.%\footnote{Note that this type of argument is not new as it was already used in \cite{csiszar78}.}
\end{proof}

Having Theorem~\ref{thm:LNpolar} the main result of this section follows straightforwardly.
\begin{mycor} \label{cor:universalPC}
Let $\W:\mathcal{X}\to \mathcal{Y}$ and $\V:\mathcal{X} \to \mathcal{Z}$ be two DMCs such that $\W \succeq \V$, then $\mathcal{D}^{n}_{\epsilon}(X|Z) \subseteq \mathcal{D}^{n}_{\epsilon}(X|Y)$, i.e., the polar code constructed for $\V$ can be used for $\W$ under successive cancellation decoding. 
\end{mycor}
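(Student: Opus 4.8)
The plan is to read the corollary off Theorem~\ref{thm:LNpolar}. The inclusion $\mathcal{D}_{\epsilon}^n(X|Z)\subseteq\mathcal{D}_{\epsilon}^n(X|Y)$ is verbatim the conclusion of that theorem, so the only remaining task is to translate this set containment into the operational statement that a polar code built for $\V$ stays reliable when run over $\W$ under SC decoding.

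First I would recall the channel-coding meaning of the low-entropy sets in \eqref{eq:DsetY} and \eqref{eq:DsetZ}. A polar code for $\V$ carries information on the positions $i\in\mathcal{D}_{\epsilon}^n(X|Z)$---the synthetic channels on which $U_i$ is nearly determined by $(U^{i-1},Z^n)$---and freezes the remaining positions to values known to the decoder. Under SC decoding the block error is at most the sum, over the information positions, of the per-position errors, and at each position the optimal per-position error is controlled by the conditional entropy through the guessing identity $\Perr{U_i}{U^{i-1},Y^n}=1-2^{-\Hcmin{U_i}{U^{i-1},Y^n}}\leq 1-2^{-\Hc{U_i}{U^{i-1},Y^n}}$ of \cite[Theorem~1]{koenig09}, exactly as in the proof of Proposition~\ref{prop:sasogluSC}. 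Thus $\Hc{U_i}{U^{i-1},Y^n}\leq\epsilon$ already forces a per-position error of order $\epsilon$.

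Second, I would invoke Theorem~\ref{thm:LNpolar} to place every information position of the $\V$-code inside $\mathcal{D}_{\epsilon}^n(X|Y)$, so that $\Hc{U_i}{U^{i-1},Y^n}\leq\epsilon$ holds throughout. Running the \emph{same} information/frozen partition over $\W$, the block error probability is then bounded by $|\mathcal{D}_{\epsilon}^n(X|Z)|$ times the worst per-position error, i.e.\ by $O(n\epsilon)$, which vanishes for the usual scaling $\epsilon=O(2^{-n^{\beta}})$ with $\beta<\tfrac{1}{2}$. This is exactly the claimed reliability of the $\V$-code over $\W$.

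Because all the substance is carried by Theorem~\ref{thm:LNpolar}---which itself rests on the tensorization of the less-noisy relation in Proposition~\ref{prop:tensorEssLN} and the data-processing bound of Proposition~\ref{prop:csiszarTrick}---the corollary is essentially a reformulation, and I expect no genuine obstacle. The only mildly delicate point is the passage from the Shannon-entropy threshold defining $\mathcal{D}_{\epsilon}^n$ to an actual SC decoding error guarantee; I would handle it through the conditional min-entropy and the guessing identity \cite{koenig09}, as in the source-coding argument, rather than reintroducing Bhattacharyya parameters.
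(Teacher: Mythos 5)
Your proposal is correct and takes essentially the same route as the paper: the paper offers no separate proof of this corollary, treating it as an immediate restatement of Theorem~\ref{thm:LNpolar} (``the main result of this section follows straightforwardly''), which is exactly how you read it. Your additional translation of the set inclusion into an SC-decoding reliability guarantee---block error bounded by the sum of per-position errors, each controlled by the entropy threshold via the guessing identity of \cite[Theorem~1]{koenig09}---is the standard polar-coding argument that the paper leaves implicit, so there is no substantive divergence.
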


Note that the statement of Corollary~\ref{cor:universalPC} would be false if the assumption were relaxed to $\W \gg \V$, since \cite{hassani09} shows that, using polar codes with SC decoding, the compound capacity for $\BEC_C$ and $\BSC_C$ with $C=\tfrac{1}{2}$ is strictly smaller than $\frac{1}{2}$.

\section{Discussion}
%\c Sa\c so\u glu showed that any optimal code for some DMC $\W$ will also perform well, under optimal decoding, when used over another DMC $\V$ that is more capable (cf.\ Proposition~\ref{prop:sasoglu}). An interesting question that is not answered by this result is the following:
%\begin{myquestion} \label{question:sason}
%Does there exist a linear code that is capacity-achieving for some DMC $\W$ but does not achieve capacity for some other DMC $\V$ (using optimal decoding), where $\W$ and $\V$ are assumed to have the same capacity-achieving input distribution and the same capacity?
%\end{myquestion}
%Corollary~\ref{cor:extensionChannel} gives us a deeper understanding of this question for the case where $\W$ is more capable than $\V$ and the use of polar codes. Whenever a polar code exists that answers the above question in the affirmative and $\W$ is more capable than $\V$, the code can be extended to a slightly different code (having the same rate) which is capacity-achieving for both $\W$ and $\V$ using optimal decoding.

For two arbitrary DMCs $\W:\mathcal{X}\to \mathcal{Y}$ and $\V:\mathcal{X}\to \mathcal{Z}$ it is of general interest to better understand the structure of the sets $\mathcal{D}_{\epsilon}^n(X|Y)$ and $\mathcal{D}_{\epsilon}^n(X|Z)$ as defined in \eqref{eq:DsetY} and \eqref{eq:DsetZ}. We have taken  a first step in this direction by proving that if $\W$ is more capable than $\V$ and considering an input distribution $P_X$ that maximizes $\I{X}{Y}-\I{X}{Z}$, we have $\mathcal{D}_{\epsilon}^n(X|Z)$ $\scriptsize{\overset{\boldsymbol{\cdot}}{\subseteq}}$ $\mathcal{D}_{\epsilon}^n(X|Y)$. For the stronger assumption that $\W$ is less noisy than $\V$ we showed that $\mathcal{D}_{\epsilon}^n(X|Z) \subseteq \mathcal{D}_{\epsilon}^n(X|Y)$ is valid for an arbitrary input distribution $P_X$, which was thus far only known for the setup where $\V$ is a degraded version of $\W$.

We would like to emphasize that the requirement of two DMCs being less noisy is in general much weaker than being degraded. Assume that we are given a polar code for a $\BSC(\alpha)$, $\alpha \in (0,\tfrac{1}{2})$. We would like to use this code for a $\BEC(\beta)$. For which values of $\beta$ is the code provably reliable using SC decoding? It is well known (cf.\ \cite[Example 5.4, Page 121]{elgamal12}) that for $0< \beta \leq 2 \alpha$, $\BSC(\alpha)$ is a degraded version of $\BEC(\beta)$. For $2\alpha <\beta \leq 4 \alpha(1-\alpha)$ we have $\BEC(\beta) \gg \BSC(\alpha)$ but $\BSC(\alpha)$ is not a degraded version of $\BEC(\beta)$. Thus according to Corollary~\ref{cor:universalPC}, we can choose $0<\beta \leq4 \alpha(1-\alpha)$, whereas before this new result only $0<\beta \leq 2  \alpha$ was proven to be reliable.
%\ds{Do there exist two DMCs that are less noisy and have the same capacity???}

The result obtained in Theorem~\ref{thm:extension} might be also helpful for the code construction of polar codes. Let $\mathcal{C}_C \ni \W:\mathcal{X}\to \mathcal{Y}$ and $\BEC_C=\V:\mathcal{X}\to \mathcal{Z}$, then by Proposition~\ref{prop:extremes} we have $\V \gg \W$. In addition, let $P_X$ be such that it maximizes $\I{X}{Y}-\I{X}{Z}$. Recall that it is in general hard to compute $\mathcal{D}_{\epsilon}^n(X|Y)$, however $\mathcal{D}_{\epsilon}^n(X|Z)$ can be computed easily  \cite{arikan09,sasogluPC}. According to Corollary~\ref{cor:extensionChannel}, we have $\mathcal{D}_{\epsilon}^n(X|Y)$ $\scriptsize{\overset{\boldsymbol{\cdot}}{\subseteq}}$  $\mathcal{D}_{\epsilon}^n(X|Z)$. Hence, $\mathcal{D}_{\epsilon}^n(X|Y)$ is essentially contained in $\mathcal{D}_{\epsilon}^n(X|Z)$. It might be that the reduction from $\mathcal{D}_{\epsilon}^n(X|Z)$ to $\mathcal{D}_{\epsilon}^n(X|Y)$ is easier than to compute $\mathcal{D}_{\epsilon}^n(X|Y)$ by itself.

A better understanding of the structure of the high-entropy sets might also help to prove an open conjecture that states that the quantum polar codes introduced in \cite{renes12} do not need any entanglement assistance.

%%%%%%%%%%%%%%%%%%%%%%%%%%%%%%%%%%
%%%%%%%%%%%%%%%%%%%%%%%%%%%%%%%%%%%
\vspace{6mm}

\noindent{\bf{Acknowledgments:}}
The authors would like to thank Chandra Nair for contributing the proof of Proposition~\ref{prop:tensorEssLN}. They further wish to thank Renato Renner, Omar Fawzi, Jossy Sayir and Seyed Hamed Hassani for helpful discussions. This work was supported by the Swiss National Science Foundation (through the National Centre of Competence in Research `Quantum Science and Technology' and grant No.~200020-135048) and by the European Research Council (grant No.~258932).

%%%%%%%%%%%%%%%%%%%%%%%%%%%%%%%%%%
%%%%%%%%%%%%%%%%%%%%%%%%%%%%%%%%%%%

% Using ieeetr. the ack is at the title page

% \section*{Acknowledgments}
% This work was supported by the Swiss National Science Foundation (through the National Centre of Competence in Research `Quantum Science and Technology' and grant No.~200020-135048) and by the European Research Council (grant No.~258932).

%\bibliographystyle{ieeetr} % use here the style you prefer :D ieeetr for TIT, apsrev4-1 for APS 
\bibliography{./bibtex/header,./bibtex/bibliofile}

%merlin.mbs apsrev4-1.bst 2010-07-25 4.21a (PWD, AO, DPC) hacked
%Control: key (0)
%Control: author (0) dotless jnrlst
%Control: editor formatted (1) identically to author
%Control: production of article title (0) allowed
%Control: page (1) range
%Control: year (0) verbatim
%Control: production of eprint (0) enabled
\begin{thebibliography}{24}%
\makeatletter
\providecommand \@ifxundefined [1]{%
 \@ifx{#1\undefined}
}%
\providecommand \@ifnum [1]{%
 \ifnum #1\expandafter \@firstoftwo
 \else \expandafter \@secondoftwo
 \fi
}%
\providecommand \@ifx [1]{%
 \ifx #1\expandafter \@firstoftwo
 \else \expandafter \@secondoftwo
 \fi
}%
\providecommand \natexlab [1]{#1}%
\providecommand \enquote  [1]{``#1''}%
\providecommand \bibnamefont  [1]{#1}%
\providecommand \bibfnamefont [1]{#1}%
\providecommand \citenamefont [1]{#1}%
\providecommand \href@noop [0]{\@secondoftwo}%
\providecommand \href [0]{\begingroup \@sanitize@url \@href}%
\providecommand \@href[1]{\@@startlink{#1}\@@href}%
\providecommand \@@href[1]{\endgroup#1\@@endlink}%
\providecommand \@sanitize@url [0]{\catcode `\\12\catcode `\$12\catcode
  `\&12\catcode `\#12\catcode `\^12\catcode `\_12\catcode `\%12\relax}%
\providecommand \@@startlink[1]{}%
\providecommand \@@endlink[0]{}%
\providecommand \url  [0]{\begingroup\@sanitize@url \@url }%
\providecommand \@url [1]{\endgroup\@href {#1}{\urlprefix }}%
\providecommand \urlprefix  [0]{URL }%
\providecommand \Eprint [0]{\href }%
\providecommand \doibase [0]{http://dx.doi.org/}%
\providecommand \selectlanguage [0]{\@gobble}%
\providecommand \bibinfo  [0]{\@secondoftwo}%
\providecommand \bibfield  [0]{\@secondoftwo}%
\providecommand \translation [1]{[#1]}%
\providecommand \BibitemOpen [0]{}%
\providecommand \bibitemStop [0]{}%
\providecommand \bibitemNoStop [0]{.\EOS\space}%
\providecommand \EOS [0]{\spacefactor3000\relax}%
\providecommand \BibitemShut  [1]{\csname bibitem#1\endcsname}%
\let\auto@bib@innerbib\@empty
%</preamble>
\bibitem [{\citenamefont {Lapidoth}\ and\ \citenamefont
  {Narayan}(1998)}]{lapidoth98}%
  \BibitemOpen
  \bibfield  {author} {\bibinfo {author} {\bibfnamefont {Amos}\ \bibnamefont
  {Lapidoth}}\ and\ \bibinfo {author} {\bibfnamefont {Prakash}\ \bibnamefont
  {Narayan}},\ }\bibfield  {title} {\enquote {\bibinfo {title} {Reliable
  communication under channel uncertainty},}\ }\href {\doibase
  10.1109/18.720535} {\bibfield  {journal} {\bibinfo  {journal} {IEEE
  Transactions on Information Theory}\ }\textbf {\bibinfo {volume} {44}},\
  \bibinfo {pages} {2148--2177} (\bibinfo {year} {1998})}\BibitemShut {NoStop}%
\bibitem [{\citenamefont {Sasoglu}(2011)}]{sasoglu_phd}%
  \BibitemOpen
  \bibfield  {author} {\bibinfo {author} {\bibfnamefont {Eren}\ \bibnamefont
  {Sasoglu}},\ }\bibfield  {title} {\enquote {\bibinfo {title} {Polar coding
  theorems for discrete systems},}\ }\href
  {http://infoscience.epfl.ch/record/168993/files/EPFL_TH5219.pdf} {\bibfield
  {journal} {\bibinfo  {journal} {PhD thesis, EPFL}\ } (\bibinfo {year}
  {2011})}\BibitemShut {NoStop}%
\bibitem [{\citenamefont {Ar{\i}kan}(2009)}]{arikan09}%
  \BibitemOpen
  \bibfield  {author} {\bibinfo {author} {\bibfnamefont {Erdal}\ \bibnamefont
  {Ar{\i}kan}},\ }\bibfield  {title} {\enquote {\bibinfo {title} {Channel
  polarization: A method for constructing capacity-achieving codes for
  symmetric binary-input memoryless channels},}\ }\href {\doibase
  10.1109/TIT.2009.2021379} {\bibfield  {journal} {\bibinfo  {journal} {IEEE
  Transactions on Information Theory}\ }\textbf {\bibinfo {volume} {55}},\
  \bibinfo {pages} {3051 --3073} (\bibinfo {year} {2009})}\BibitemShut
  {NoStop}%
\bibitem [{\citenamefont {Sasoglu}(2012)}]{sasogluPC}%
  \BibitemOpen
  \bibfield  {author} {\bibinfo {author} {\bibfnamefont {Eren}\ \bibnamefont
  {Sasoglu}},\ }\bibfield  {title} {\enquote {\bibinfo {title} {Polarization
  and polar codes},}\ }\href@noop {} {\bibfield  {journal} {\bibinfo  {journal}
  {Foundations and Trends in Communications and Information Theory}\ }
  (\bibinfo {year} {2012})}\BibitemShut {NoStop}%
\bibitem [{\citenamefont {Tal}\ and\ \citenamefont
  {Vardy}(2011)}]{talandvardy10}%
  \BibitemOpen
  \bibfield  {author} {\bibinfo {author} {\bibfnamefont {Ido}\ \bibnamefont
  {Tal}}\ and\ \bibinfo {author} {\bibfnamefont {Alexander}\ \bibnamefont
  {Vardy}},\ }\bibfield  {title} {\enquote {\bibinfo {title} {How to construct
  polar codes},}\ }\href@noop {} {\  (\bibinfo {year} {2011})},\ \bibinfo
  {note} {submitted to IEEE Transactions on Information Theory, available at
  \texttt{arXiv:\href{http://arxiv.org/abs/1105.6164}{1105.6164}}}\BibitemShut
  {NoStop}%
\bibitem [{\citenamefont {Tal}\ \emph {et~al.}(2012)\citenamefont {Tal},
  \citenamefont {Sharov},\ and\ \citenamefont {Vardy}}]{tal12}%
  \BibitemOpen
  \bibfield  {author} {\bibinfo {author} {\bibfnamefont {Ido}\ \bibnamefont
  {Tal}}, \bibinfo {author} {\bibfnamefont {Artyom}\ \bibnamefont {Sharov}}, \
  and\ \bibinfo {author} {\bibfnamefont {Alexander}\ \bibnamefont {Vardy}},\
  }\bibfield  {title} {\enquote {\bibinfo {title} {Constructing polar codes for
  non-binary alphabets and macs},}\ }\href {\doibase 10.1109/ISIT.2012.6283739}
  {\bibfield  {journal} {\bibinfo  {journal} {Proceedings IEEE International
  Symposium on Information Theory (ISIT)}\ ,\ \bibinfo {pages} {2132 --2136}}
  (\bibinfo {year} {2012})}\BibitemShut {NoStop}%
\bibitem [{\citenamefont {Guruswami}\ and\ \citenamefont
  {Xia}(2013)}]{guruswami13}%
  \BibitemOpen
  \bibfield  {author} {\bibinfo {author} {\bibfnamefont {Venkatesan}\
  \bibnamefont {Guruswami}}\ and\ \bibinfo {author} {\bibfnamefont {Patrick}\
  \bibnamefont {Xia}},\ }\bibfield  {title} {\enquote {\bibinfo {title} {Polar
  codes: Speed of polarization and polynomial gap to capacity},}\ }\href@noop
  {} {\  (\bibinfo {year} {2013})},\ \bibinfo {note} {available at
  \texttt{arXiv:\href{http://arxiv.org/abs/1304.4321}{1304.4321}}}\BibitemShut
  {NoStop}%
\bibitem [{\citenamefont {Pedarsani}\ \emph {et~al.}(2011)\citenamefont
  {Pedarsani}, \citenamefont {Hassani}, \citenamefont {Tal},\ and\
  \citenamefont {Telatar}}]{pedarsani11}%
  \BibitemOpen
  \bibfield  {author} {\bibinfo {author} {\bibfnamefont {Ramtin}\ \bibnamefont
  {Pedarsani}}, \bibinfo {author} {\bibfnamefont {Hamed~S.}\ \bibnamefont
  {Hassani}}, \bibinfo {author} {\bibfnamefont {Ido}\ \bibnamefont {Tal}}, \
  and\ \bibinfo {author} {\bibfnamefont {Emre}\ \bibnamefont {Telatar}},\
  }\bibfield  {title} {\enquote {\bibinfo {title} {On the construction of polar
  codes},}\ }in\ \href {\doibase 10.1109/ISIT.2011.6033724} {\emph {\bibinfo
  {booktitle} {Proceedings IEEE International Symposium on Information Theory
  (ISIT)}}}\ (\bibinfo {year} {2011})\ pp.\ \bibinfo {pages}
  {11--15}\BibitemShut {NoStop}%
\bibitem [{\citenamefont {Hassani}\ \emph {et~al.}(2009)\citenamefont
  {Hassani}, \citenamefont {Korada},\ and\ \citenamefont
  {Urbanke}}]{hassani09}%
  \BibitemOpen
  \bibfield  {author} {\bibinfo {author} {\bibfnamefont {Seyed~Hamed}\
  \bibnamefont {Hassani}}, \bibinfo {author} {\bibfnamefont {Satish~B.}\
  \bibnamefont {Korada}}, \ and\ \bibinfo {author} {\bibfnamefont {Ruediger}\
  \bibnamefont {Urbanke}},\ }\bibfield  {title} {\enquote {\bibinfo {title}
  {The compound capacity of polar codes},}\ }in\ \href {\doibase
  10.1109/ALLERTON.2009.5394827} {\emph {\bibinfo {booktitle} {47th Annual
  Allerton Conference on Communication, Control, and Computing, Allerton,
  2009}}}\ (\bibinfo {year} {2009})\ pp.\ \bibinfo {pages} {16--21}\BibitemShut
  {NoStop}%
\bibitem [{\citenamefont {Hassani}\ and\ \citenamefont
  {Urbanke}(2013)}]{hassani13}%
  \BibitemOpen
  \bibfield  {author} {\bibinfo {author} {\bibfnamefont {Seyed~Hamed}\
  \bibnamefont {Hassani}}\ and\ \bibinfo {author} {\bibfnamefont {Ruediger}\
  \bibnamefont {Urbanke}},\ }\bibfield  {title} {\enquote {\bibinfo {title}
  {Universal polar codes},}\ }\href@noop {} {\  (\bibinfo {year} {2013})},\
  \bibinfo {note} {available at
  \texttt{arXiv:\href{http://arxiv.org/abs/1307.7223}{1307.7223}}}\BibitemShut
  {NoStop}%
\bibitem [{\citenamefont {Sasoglu}\ and\ \citenamefont
  {Wang}(2013)}]{sasogluLele13}%
  \BibitemOpen
  \bibfield  {author} {\bibinfo {author} {\bibfnamefont {Eren}\ \bibnamefont
  {Sasoglu}}\ and\ \bibinfo {author} {\bibfnamefont {Lele}\ \bibnamefont
  {Wang}},\ }\bibfield  {title} {\enquote {\bibinfo {title} {Universal
  polarization},}\ }\href@noop {} {\  (\bibinfo {year} {2013})},\ \bibinfo
  {note} {available at
  \texttt{arXiv:\href{http://arxiv.org/abs/1307.7495}{1307.7495}}}\BibitemShut
  {NoStop}%
\bibitem [{\citenamefont {Alsan}(2013)}]{alsan13}%
  \BibitemOpen
  \bibfield  {author} {\bibinfo {author} {\bibfnamefont {Mine}\ \bibnamefont
  {Alsan}},\ }\bibfield  {title} {\enquote {\bibinfo {title} {Universal polar
  decoding with channel knowledge at the encoder},}\ }\href@noop {} {\
  (\bibinfo {year} {2013})},\ \bibinfo {note} {available at
  \texttt{arXiv:\href{http://arxiv.org/abs/1311.7590}{1311.7590}}}\BibitemShut
  {NoStop}%
\bibitem [{\citenamefont {Goela}\ \emph {et~al.}(2013)\citenamefont {Goela},
  \citenamefont {Abbe},\ and\ \citenamefont {Gastpar}}]{goela13}%
  \BibitemOpen
  \bibfield  {author} {\bibinfo {author} {\bibfnamefont {Naveen}\ \bibnamefont
  {Goela}}, \bibinfo {author} {\bibfnamefont {Emmanuel}\ \bibnamefont {Abbe}},
  \ and\ \bibinfo {author} {\bibfnamefont {Michael}\ \bibnamefont {Gastpar}},\
  }\bibfield  {title} {\enquote {\bibinfo {title} {Polar codes for broadcast
  channels},}\ }\href@noop {} {\  (\bibinfo {year} {2013})},\ \bibinfo {note}
  {available at
  \texttt{arXiv:\href{http://arxiv.org/abs/1301.6150}{1301.6150}}}\BibitemShut
  {NoStop}%
\bibitem [{\citenamefont {Nair}(2009)}]{nair09}%
  \BibitemOpen
  \bibfield  {author} {\bibinfo {author} {\bibfnamefont {Chandra}\ \bibnamefont
  {Nair}},\ }\bibfield  {title} {\enquote {\bibinfo {title} {Capacity regions
  of two new classes of 2-receiver broadcast channels},}\ }\href {\doibase
  10.1109/ISIT.2009.5205399} {\bibfield  {journal} {\bibinfo  {journal}
  {Proceedings IEEE International Symposium on Information Theory (ISIT)}\ ,\
  \bibinfo {pages} {1839--1843}} (\bibinfo {year} {2009})}\BibitemShut
  {NoStop}%
\bibitem [{\citenamefont {K\"{o}rner}\ and\ \citenamefont
  {Marton}(1975)}]{korner75}%
  \BibitemOpen
  \bibfield  {author} {\bibinfo {author} {\bibfnamefont {J\'{a}nos}\
  \bibnamefont {K\"{o}rner}}\ and\ \bibinfo {author} {\bibfnamefont {Katalin}\
  \bibnamefont {Marton}},\ }\bibfield  {title} {\enquote {\bibinfo {title} {A
  source network problem involving the comparison of two channels ii},}\
  }\href@noop {} {\bibfield  {journal} {\bibinfo  {journal} {Transactions of
  the Colloquium on Information Theory}\ } (\bibinfo {year}
  {1975})}\BibitemShut {NoStop}%
\bibitem [{\citenamefont {Csisz\'ar}\ and\ \citenamefont
  {K\"orner}(1978)}]{csiszar78}%
  \BibitemOpen
  \bibfield  {author} {\bibinfo {author} {\bibfnamefont {Imre}\ \bibnamefont
  {Csisz\'ar}}\ and\ \bibinfo {author} {\bibfnamefont {J\'anos}\ \bibnamefont
  {K\"orner}},\ }\bibfield  {title} {\enquote {\bibinfo {title} {Broadcast
  channels with confidential messages},}\ }\href {\doibase
  10.1109/TIT.1978.1055892} {\bibfield  {journal} {\bibinfo  {journal} {IEEE
  Transactions on Information Theory}\ }\textbf {\bibinfo {volume} {24}},\
  \bibinfo {pages} {339 -- 348} (\bibinfo {year} {1978})}\BibitemShut {NoStop}%
\bibitem [{\citenamefont {Csisz\'{a}r}\ and\ \citenamefont
  {K\"{o}rner}(1981)}]{csiszarkorner81}%
  \BibitemOpen
  \bibfield  {author} {\bibinfo {author} {\bibfnamefont {Imre}\ \bibnamefont
  {Csisz\'{a}r}}\ and\ \bibinfo {author} {\bibfnamefont {J\'{a}nos}\
  \bibnamefont {K\"{o}rner}},\ }\href@noop {} {\emph {\bibinfo {title}
  {Information Theory: Coding Theorems for Discrete Memoryless Systems}}},\
  \bibinfo {edition} {2nd}\ ed.\ (\bibinfo  {publisher} {Academic Press},\
  \bibinfo {year} {1981})\BibitemShut {NoStop}%
\bibitem [{\citenamefont {Geng}\ \emph {et~al.}(2010)\citenamefont {Geng},
  \citenamefont {Nair}, \citenamefont {Shamai},\ and\ \citenamefont
  {Wang}}]{geng10}%
  \BibitemOpen
  \bibfield  {author} {\bibinfo {author} {\bibfnamefont {Yanlin}\ \bibnamefont
  {Geng}}, \bibinfo {author} {\bibfnamefont {Chandra}\ \bibnamefont {Nair}},
  \bibinfo {author} {\bibfnamefont {Shlomo}\ \bibnamefont {Shamai}}, \ and\
  \bibinfo {author} {\bibfnamefont {Zizhou~Vincent}\ \bibnamefont {Wang}},\
  }\bibfield  {title} {\enquote {\bibinfo {title} {On broadcast channels with
  binary inputs and symmetric outputs},}\ }\href {\doibase
  10.1109/ISIT.2010.5513518} {\bibfield  {journal} {\bibinfo  {journal}
  {Proceedings IEEE International Symposium on Information Theory (ISIT)}\ ,\
  \bibinfo {pages} {545--549}} (\bibinfo {year} {2010})},\ \bibinfo {note}
  {extended version available at
  \texttt{arXiv:\href{http://arxiv.org/abs/1001.2062}{1001.2062}}}\BibitemShut
  {NoStop}%
\bibitem [{\citenamefont {K\"onig}\ \emph {et~al.}(2009)\citenamefont
  {K\"onig}, \citenamefont {Renner},\ and\ \citenamefont
  {Schaffner}}]{koenig09}%
  \BibitemOpen
  \bibfield  {author} {\bibinfo {author} {\bibfnamefont {Robert}\ \bibnamefont
  {K\"onig}}, \bibinfo {author} {\bibfnamefont {Renato}\ \bibnamefont
  {Renner}}, \ and\ \bibinfo {author} {\bibfnamefont {Christian}\ \bibnamefont
  {Schaffner}},\ }\bibfield  {title} {\enquote {\bibinfo {title} {The
  operational meaning of min- and max-entropy},}\ }\href {\doibase
  10.1109/TIT.2009.2025545} {\bibfield  {journal} {\bibinfo  {journal} {IEEE
  Transactions on Information Theory}\ }\textbf {\bibinfo {volume} {55}},\
  \bibinfo {pages} {4337 --4347} (\bibinfo {year} {2009})}\BibitemShut
  {NoStop}%
\bibitem [{\citenamefont {Slepian}\ and\ \citenamefont
  {Wolf}(1973)}]{slepian73}%
  \BibitemOpen
  \bibfield  {author} {\bibinfo {author} {\bibfnamefont {David}\ \bibnamefont
  {Slepian}}\ and\ \bibinfo {author} {\bibfnamefont {Jack~K.}\ \bibnamefont
  {Wolf}},\ }\bibfield  {title} {\enquote {\bibinfo {title} {Noiseless coding
  of correlated information sources},}\ }\href {\doibase
  10.1109/TIT.1973.1055037} {\bibfield  {journal} {\bibinfo  {journal} {IEEE
  Transactions on Information Theory}\ }\textbf {\bibinfo {volume} {19}},\
  \bibinfo {pages} {471--480} (\bibinfo {year} {1973})}\BibitemShut {NoStop}%
\bibitem [{\citenamefont {Ar{\i}kan}(2010)}]{arikan10}%
  \BibitemOpen
  \bibfield  {author} {\bibinfo {author} {\bibfnamefont {Erdal}\ \bibnamefont
  {Ar{\i}kan}},\ }\bibfield  {title} {\enquote {\bibinfo {title} {Source
  polarization},}\ }\href {\doibase 10.1109/ISIT.2010.5513567} {\bibfield
  {journal} {\bibinfo  {journal} {Proceedings IEEE International Symposium on
  Information Theory (ISIT)}\ ,\ \bibinfo {pages} {899 --903}} (\bibinfo {year}
  {2010})}\BibitemShut {NoStop}%
\bibitem [{\citenamefont {Renes}\ \emph {et~al.}(2013)\citenamefont {Renes},
  \citenamefont {Renner},\ and\ \citenamefont {Sutter}}]{sutter13}%
  \BibitemOpen
  \bibfield  {author} {\bibinfo {author} {\bibfnamefont {Joseph~M.}\
  \bibnamefont {Renes}}, \bibinfo {author} {\bibfnamefont {Renato}\
  \bibnamefont {Renner}}, \ and\ \bibinfo {author} {\bibfnamefont {David}\
  \bibnamefont {Sutter}},\ }\bibfield  {title} {\enquote {\bibinfo {title}
  {Efficient one-way secret-key agreement and private channel coding via
  polarization},}\ }in\ \href {\doibase 10.1007/978-3-642-42033-7_11} {\emph
  {\bibinfo {booktitle} {Advances in Cryptology - ASIACRYPT 2013}}},\ \bibinfo
  {series} {Lecture Notes in Computer Science}, Vol.\ \bibinfo {volume}
  {8269},\ \bibinfo {editor} {edited by\ \bibinfo {editor} {\bibfnamefont
  {Kazue}\ \bibnamefont {Sako}}\ and\ \bibinfo {editor} {\bibfnamefont
  {Palash}\ \bibnamefont {Sarkar}}}\ (\bibinfo  {publisher} {Springer Berlin
  Heidelberg},\ \bibinfo {year} {2013})\ pp.\ \bibinfo {pages}
  {194--213}\BibitemShut {NoStop}%
\bibitem [{\citenamefont {El~Gamal}\ and\ \citenamefont
  {Kim}(2012)}]{elgamal12}%
  \BibitemOpen
  \bibfield  {author} {\bibinfo {author} {\bibfnamefont {Abbas}\ \bibnamefont
  {El~Gamal}}\ and\ \bibinfo {author} {\bibfnamefont {Young-Han}\ \bibnamefont
  {Kim}},\ }\href@noop {} {\emph {\bibinfo {title} {Network Information
  Theory}}}\ (\bibinfo  {publisher} {Cambridge University Press},\ \bibinfo
  {year} {2012})\BibitemShut {NoStop}%
\bibitem [{\citenamefont {Renes}\ \emph {et~al.}(2012)\citenamefont {Renes},
  \citenamefont {Dupuis},\ and\ \citenamefont {Renner}}]{renes12}%
  \BibitemOpen
  \bibfield  {author} {\bibinfo {author} {\bibfnamefont {Joseph~M.}\
  \bibnamefont {Renes}}, \bibinfo {author} {\bibfnamefont {Fr\'ed\'eric}\
  \bibnamefont {Dupuis}}, \ and\ \bibinfo {author} {\bibfnamefont {Renato}\
  \bibnamefont {Renner}},\ }\bibfield  {title} {\enquote {\bibinfo {title}
  {Efficient polar coding of quantum information},}\ }\href {\doibase
  10.1103/PhysRevLett.109.050504} {\bibfield  {journal} {\bibinfo  {journal}
  {Physical Review Letters}\ }\textbf {\bibinfo {volume} {109}},\ \bibinfo
  {pages} {050504} (\bibinfo {year} {2012})}\BibitemShut {NoStop}%
\end{thebibliography}%

%%%%%%%%%%%%%%%%%
%%%%%%%%%%%%%%%%%
%%%%%%%%%%%%%%%%%
%%%%%%%%%%%%%%%%%

\clearpage

\end{document}